\documentclass[
superscriptaddress,
twocolumn,
nofootinbib,
amsmath,amssymb,
aps,
pre,
floatfix,
]{revtex4-2}

\usepackage[switch]{lineno}
\usepackage{graphicx}
\usepackage[dvipsnames]{xcolor}
\usepackage{placeins}
\usepackage[caption=false]{subfig}
\usepackage{epsfig}
\usepackage{float} 
\usepackage{amsthm}
\usepackage{comment}
\usepackage{tikz}
\usetikzlibrary{arrows.meta}
\usepackage{chngcntr}

\newtheorem{proposition}{Proposition}
\newtheorem{corollary}{Corollary}
\newtheorem{remark}{Remark}

\theoremstyle{definition}
\newtheorem{example}{Example}
\newtheorem{definition}{Definition}
\newtheorem*{proposition*}{Proposition}
\newtheorem*{corollary*}{Corollary}

\usepackage{hyperref}
\usepackage{cancel}
\hypersetup{
	unicode=false,          
	pdftoolbar=true,        
	pdfmenubar=true,        
	pdffitwindow=false,     
	pdftitle={},    
	pdfauthor={},     
	pdfsubject={},   
	pdfcreator={},   
	pdfproducer={},  
	pdfkeywords={}, 
	pdfnewwindow=true,      
	colorlinks=false,       
	linkcolor=red,          
	citecolor=green,        
	filecolor=magenta,      
	urlcolor=cyan           
}

\usepackage{cleveref}
\makeatletter
\AddToHook{cmd/appendix/before}{\def\cref@section@alias{appendix}}
\makeatother

\crefformat{figure}{#2Fig.~#1#3}
\crefmultiformat{figure}{#2Figs.~#1#3}{ and~#2#1#3}{, #2#1#3}{ and~#2#1#3}

\crefformat{theorem}{#2Theorem~#1#3}
\crefformat{remark}{#2Remark~#1#3}

\crefformat{enumi}{#2Step~#1#3}
\crefmultiformat{enumi}{#2Steps~#1#3}{ and~#2#1#3}{, #2#1#3}{ and~#2#1#3}

\crefrangeformat{figure}{#3Figs.~#1#4 to~#5#2#6}

\crefformat{appendix}{#2Appendix~#1#3}

\crefformat{corollary}{#2Corollary~#1#3}

\crefformat{definition}{#2Definition~#1#3}
\crefformat{proposition}{#2Proposition~#1#3}

\crefformat{section}{#2Section~#1#3}
\crefmultiformat{section}{#2\S\S~#1#3}{ and~#2#1#3}{, #2#1#3}{ and~#2#1#3}

\crefformat{algorithm}{#2Algorithm~#1#3}
\crefmultiformat{algorithm}{#2Algorithms~#1#3}{ and~#2#1#3}{, #2#1#3}{ and~#2#1#3}

\crefformat{equation}{#2Eq.~(#1)#3}
\crefmultiformat{equation}{#2Eqs.~(#1)#3}{ and~#2(#1)#3}{, #2(#1)#3}{ and~#2(#1)#3}

\crefformat{table}{#2Table~#1#3}
\crefmultiformat{table}{#2Tables~#1#3}{ and~#2#1#3}{, #2#1#3}{ and~#2#1#3}
\def\blue{\textcolor{blue}}

\counterwithin{equation}{section}

\begin{document}
	
	\def\stackalignment{l}
	\begin{abstract}
		Width-balance conditions at a junction are often associated with reflectionless transmission, or transparency, in some one-dimensional wave models on networks. We show that, on cyclic networks, this identification is incomplete: local balance is a vertex-level condition, while transparency also requires synchronization of the path travel times. To make this precise, we consider a shallow-water wave model and derive the channel-width-weighted scattering law for a vertex of arbitrary degree and introduce a source-relative definition of balance, reflecting the fact that the same junction may be reflective or reflectionless depending on which edges carry incoming waves. Under this definition, a balanced vertex is transparent only to the synchronized incoming-amplitude direction. For an \(N\)-path island, the resulting first-generation upstream reflection is governed in frequency space by the width-weighted distribution of path delays. Exact broadband cancellation requires all path travel times to agree; when they do not, commensurability of the path-length differences produces a periodic comb of frequency-selective zeros, independent of channel widths. At \(N\ge 4\) we further exhibit a hybrid regime in which the reflection factor vanishes without commensurability among the path lengths.
	\end{abstract}

	\title{Source-Induced Reflection in Balanced Shallow-Water Networks}
	\author{B. Gu}    
	\affiliation{Department of Mathematical Sciences, Worcester Polytechnic Institute, Worcester, MA 01609, USA.}
	\author{C. Norton}    
	\affiliation{Department of Mathematical Sciences, Worcester Polytechnic Institute, Worcester, MA 01609, USA.}  
	\author{A. Nachbin}    
	\affiliation{Department of Mathematical Sciences, Worcester Polytechnic Institute, Worcester, MA 01609, USA.}
    \affiliation{Instituto de Matemática Pura e Aplicada, Rio de Janeiro, RJ 22460-320, Brazil.}
	
	\pagestyle{plain}
	\renewcommand{\baselinestretch}{1.1}
	\maketitle
	
	\section{Introduction}
	One-dimensional hyperbolic systems on networks describe wave propagation in a wide range of branched physical systems: shallow water in rivers and irrigation canals~\cite{Jacovkis91,LeugeringSchmidt2002,BastinCoron2016,BellamoliMullerToro2018,BrianiPuppoRibot2022}, blood flow in arterial trees~\cite{SherwinIJNF2003,SherwinJEM2003}, gas in pipeline networks~\cite{GugatHerty2011}, and acoustic waves in branched ducts. On each edge of the network, waves propagate along characteristic families; at junctions, algebraic coupling conditions enforce conservation of mass and continuity of an appropriate dynamical variable \cite{Bressan,GaravelloPiccoli2006}. These junction conditions transmit part of the incoming waves and  generally reflect part of them. A central question is therefore when a junction, or a network assembled from junctions, is transparent: an incoming disturbance produces no reflected component in the incoming branches.

    For quantum graphs -- metric graphs equipped with a differential equation along their edges -- transparent behavior is often tied to sum-rule conditions on edge weights, known as balance conditions.  In particular,  degree-three balanced star graphs can reduce to evolution on a line or half-line under suitable symmetry of the incoming data  \cite{Sobirov,PelinovskyGoodman,Yusupov1}. For linear shallow-water networks, Jacovkis~\cite{Jacovkis91} formulated the corresponding coupling conditions for Y-junctions, or degree-three star graphs, using continuity of surface elevation and conservation of width-weighted flux; see \cref{2junctions}.  This motivates the questions addressed in the present work: how does the transparency picture extend to junctions of arbitrary degree, and what additional restrictions appear when locally balanced junctions are assembled into networks with \emph{cycles}? 

	\begin{figure}[!htb]
		\centering
		\includegraphics[width=3.4in,height=2.in]{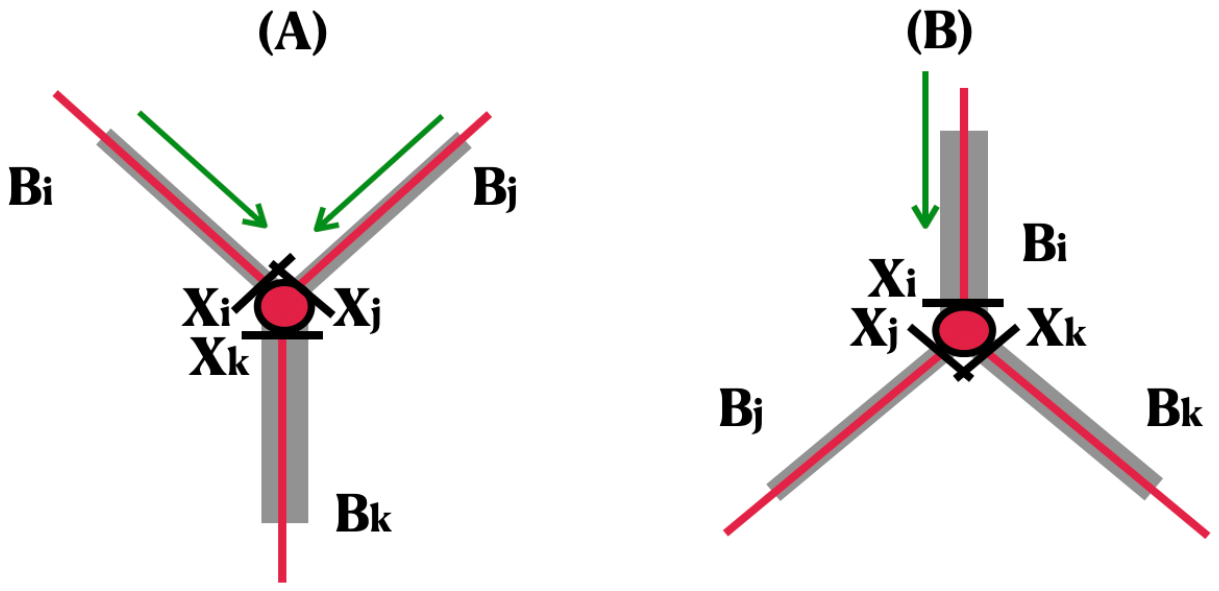}
		\caption{Two Y-shaped degree-three junctions: {\bf (A)} a converging-junction (CJ) and {\bf (B)} a diverging-junction (DJ), all with the respective extreme points $\{ x_i, x_j, x_k\}$. The arrows indicate the directions used for our initial data. The shaded grey areas indicate the different weights that can be imposed over the edges of a network. In the present work, they are channel widths $B_i,B_j$ and $B_k$, respectively.}
		\label{2junctions}
	\end{figure}
	
	Our contributions extend this picture in three ways. First, we derive the channel-width-weighted scattering matrix at a vertex of arbitrary degree. Second, motivated by the fact that the same physical junction can be reflectionless or reflective depending on which edges carry incoming waves, we refine the usual balance condition so that it accounts for the active incoming edges. This yields an algebraic and inherently time-dependent condition on when a balanced junction suppresses reflection. Third, we show that such local balance is not sufficient for transparency on networks with cycles. The island graphs in \cref{Islands} provide a minimal example: local balance may hold at each junction, but unequal path lengths (\cref{fig:asym_island}) can desynchronize the arriving waves and regenerate upstream reflection. We make this observation explicit by deriving the first-generation upstream reflection for two-path and \(N\)-path islands. In frequency space, the \(N\)-path formula yields a reflection factor that separates broadband transparency, which requires equal travel times, from frequency-selective cancellation at the harmonics of a common divisor of the path-length differences.
	\begin{figure}[htbp]
		\centering
		\subfloat[Symmetric Island $L_T=L_B$]{\label{fig:sym_island}
			\begin{tikzpicture}[
				scale=0.75,
				junction/.style={circle, draw=black, fill=gray!40, thick, inner sep=2.5pt},
				edge/.style={line width=2.2pt, red!80!black},
				flow/.style={-{Latex[length=4mm,width=2.5mm]}, line width=1.5pt, red!80!black},
				lab/.style={font=\small}
				]
				
				\node[junction, label=below left:{$\nu_D$}] (D) at (0,0) {};
				\node[junction, label=below right:{$\nu_C$}] (C) at (6,0) {};
				
				\draw[edge] (-2,0) -- (D);
				\draw[edge] (C) -- (8,0);
				
				\draw[edge] (D) .. controls (1.3,1.7) and (4.7,1.7) .. (C);
				\draw[edge] (D) .. controls (1.3,-1.7) and (4.7,-1.7) .. (C);
				
				\draw[flow] (-2,0.45) -- (-0.25,0.45);
				
				\node[lab, above] at (-1,0) {$e_0$};
				\node[lab, above] at (7,0) {$e_\infty$};
				
				\node[lab, above] at (3,1.6) {$e_T$};
				\node[lab, below] at (3,1.2) {$L_T$};
				
				\node[lab, above] at (3,-1.2) {$e_B$};
				\node[lab, below] at (3,-1.6) {$L_B$};
				
				
			\end{tikzpicture}   
		}
		\\
		\subfloat[Asymmetric Island $L_T>L_B$]{\label{fig:asym_island}
			\begin{tikzpicture}[
				scale=0.75,
				junction/.style={circle, draw=black, fill=gray!40, thick, inner sep=2.5pt},
				edge/.style={line width=2.2pt, red!80!black},
				flow/.style={-{Latex[length=4mm,width=2.5mm]}, line width=1.5pt, red!80!black},
				lab/.style={font=\small}
				]
				
				\node[junction, label=below left:{$\nu_D$}] (D) at (0,0) {};
				\node[junction, label=below right:{$\nu_C$}] (C) at (6,0) {};
				
				\draw[edge] (-2,0) -- (D);
				\draw[edge] (C) -- (8,0);
				
				\draw[edge] (D) .. controls (0.8,2.6) and (5.2,2.6) .. (C);
				\draw[edge] (D) .. controls (1.5,-1.0) and (4.5,-1.0) .. (C);
				
				\draw[flow] (-2,0.45) -- (-0.25,0.45);
				
				\node[lab, above] at (-1,0) {$e_0$};
				\node[lab, above] at (7,0) {$e_\infty$};
				
				\node[lab, above] at (3,2.2) {$e_T$};
				\node[lab, below] at (3,2.0) {$L_T$};
				
				\node[lab, above] at (3,-0.8) {$e_B$};
				\node[lab, below] at (3,-0.95) {$L_B$};
				
				
			\end{tikzpicture}  
		}
		\caption{(a) A symmetric island. (b) An asymmetric island with a longer edge on the top side. This is not a star graph, but has a combination of those in \cref{2junctions}. The disturbance arrives from upstream (the left edge $e_0$).}
		\label{Islands}
	\end{figure}

    Our setting is closest in spirit to recent work on transparent quantum graphs by Yusupov et al.~\cite{Yusupov2,Yusupov2025}, who study the Schr{\"o}dinger evolution on metric star graphs of degree-three under weighted-Kirchhoff continuity. The PDE and the coupling law differ — linearized Saint-Venant with Stoker height-continuity and width-weighted flux here, Schr\"odinger with weighted Kirchhoff there — but the shared question is when a junction transmits without reflection. In most works the star-graphs are of degree-three and have only one leading-edge, namely with the source disturbance. 

    Yusupov et al.~\cite{Yusupov2,Yusupov2025} call attention to the importance of studying transparent graphs.  Having a dominant reflection, rather than transmission, implies a network of large resistivity regarding wave propagation. In applications such as condensed matter, optimal transmission implies ballistic transport of charge, spin, heat and other carriers in low-dimensional branched materials. Important areas, where  ballistic transport in branched structures is required, are for example molecular electronics and conducting polymers ~\cite{Yusupov1}. For water waves, optimal transmission, by contrast, is itself a hazardous situation in which the free surface disturbance can propagate downstream without being scattered. The new ingredient in the present analysis is the synchronization effect created by multiple sources or cycles. These findings are of both physical and mathematical interest.
	
	The paper is organized as follows.  \cref{sec:model} introduces the linearized shallow-water model, Stoker vertex coupling, and the arbitrary-degree scattering law at a vertex. \cref{sec:source_balance} defines source-relative balance and characterizes non-reflection through the source-source block of the scattering matrix. \cref{sec:cycle} applies the theory to the two-path island and \(N\)-path island. \cref{sec:conclusion} summarizes the results and discusses extensions.
	\section{Long-wave dynamics and vertex scattering}
	\label{sec:model}
	Following Jacovkis~\cite{Jacovkis91}, we model long-wave propagation on each channel of the network by the linearized Saint-Venant (shallow-water) equations, 
	\begin{equation}
		\frac{\partial \mathbf{w}}{\partial t} + \mathcal{A}\, \frac{\partial \mathbf{w}}{\partial x} = 0,
		\qquad
		\mathcal{A} = \begin{bmatrix} U & g \\ H & U \end{bmatrix},
		\label{diffeq}
	\end{equation}
	where $\mathbf{w} = (u(x,t), h(x,t))^{\rm T}$, $u$ is the depth-averaged velocity, $h$ is the depth fluctuation about a uniform reference depth $H$, $U$ is the underlying streaming flow, and $g$ is the acceleration due to gravity. The eigenvalues of $\mathcal{A}$ are $\lambda_{1,2} = U \mp \sqrt{gH}$. We restrict throughout this work to the \emph{subcritical regime} $\lambda_1 < 0 < \lambda_2$, equivalently with Froude number $Fr := U/\sqrt{gH} < 1$, in which the long-wave speed $C_0 = \sqrt{gH}$ exceeds the streaming flow and disturbances propagate both upstream and downstream. 
    In order to highlight the reflection-transmission dynamics through symmetric propagating modes, we set \(U=0\) so that
	\begin{equation}
		\lambda_1=-\sqrt{gH},
		\qquad
		\lambda_2=\sqrt{gH}.
		\label{eq:modal_transform}
	\end{equation}
	This symmetric characteristic structure isolates the scattering generated by the junction conditions themselves, which is the central object of the analysis below. 
	
	We diagonalize the edge dynamics, \cref{diffeq}, by writing $\mathbf{y}=(v,z)^{\rm T}:=M^{-1}\mathbf{w}$ where
	\begin{equation}
		M =
		\begin{bmatrix}
			m_{11} & m_{12} \\
			m_{21} & m_{22}
		\end{bmatrix}
		=
		\begin{bmatrix}
			-\sqrt{gH} & \sqrt{gH} \\
			H & H
		\end{bmatrix},
		\label{eq:eigenvector}
	\end{equation} 
	such that $\mathcal{A} = M\Lambda M^{-1}$ where $\Lambda$ is a diagonal matrix with entries $\lambda_{1,2}$. As a result, \cref{diffeq} decouples into two unidirectional wave equations
	\begin{equation}
		\frac{\partial v}{\partial t} + \lambda_1 \frac{\partial v}{\partial x} = 0,
		\qquad
		\frac{\partial z}{\partial t} + \lambda_2 \frac{\partial z}{\partial x} = 0,
		\label{uni-equ}
	\end{equation}
	where \(v\) propagates upstream and \(z\) propagates downstream in the subcritical regime. Moreover, the physical and characteristic variables satisfy the transformation
	\begin{equation}
		u=\sqrt{gH}(z-v),
		\qquad
		h=H(v+z).
		\label{eq:uh_vz_transform}
	\end{equation}
	
	A one-dimensional (1D) fluvial network is represented by a metric graph: edges are 1D-channels and vertices are either open boundary points or junctions. To motivate the vertex  scattering notation, we first consider the two degree-three junctions used by  Jacovkis~\cite{Jacovkis91}: a converging junction (CJ), where two reaches merge into one, and a diverging junction (DJ), where one reach bifurcates into two; see \cref{2junctions}. These Y-junctions are used only as a motivating example. Their role is to show that the same local coupling conditions lead to the same scattering matrix for CJ and DJ configurations, once the incoming and outgoing characteristic modes are identified correctly. This bookkeeping issue is what the arbitrary-degree formulation later resolves.
	
	At each Y-junction the dynamic coupling is enforced by the Stoker compatibility conditions, which coincide with the Neumann--Kirchhoff vertex conditions of quantum-graph  theory~\cite{BerkoKuch} (two formulations we use interchangeably from hereon):
	\begin{subequations}
		\label{simplifiedcc}
		\begin{gather}
			B_i u(x_i, t) \pm B_j u(x_j, t) = B_k u(x_k, t), \label{eq:flux_cons}
			\\
			h(x_i, t) = h(x_j, t) = h(x_k, t),      
		\end{gather}
	\end{subequations}
	where \(B_i,B_j,B_k\) are the channel widths, and \(x_i,x_j,x_k\) are the extremal point evaluated at the junction using the local coordinates of each edge.  The first condition expresses conservation of width-weighted flux and the second expresses continuity of free-surface elevation.  The sign is \(+\) for a CJ and \(-\) for a DJ. Existence and uniqueness of \(C^1\) solutions on such networks under these conditions were established by Jacovkis~\cite{Jacovkis91}; our focus is the scattering generated by the same junction laws. For the similar coupling in a nonlinear setting, see \cite{CaputoDutykh2014} on the 2D-to-1D sine-Gordon reduction at Y- and T-junctions.
	
	In modal variables per \cref{eq:uh_vz_transform}, the Y-junction conditions \cref{simplifiedcc} become
	\begin{subequations}
		\label{ccsub2}
		\begin{gather}
			B_i(m_{11}v_i+m_{12}z_i)
			\pm
			B_j(m_{11}v_j+m_{12}z_j)
			\notag\\
			=
			B_k(m_{11}v_k+m_{12}z_k),
			\label{eq:flux_cons_char}\\
			m_{21}v_i+m_{22}z_i
			=
			m_{21}v_j+m_{22}z_j
			=
			m_{21}v_k+m_{22}z_k.
			\label{eq:cont_char}
		\end{gather}
	\end{subequations}
	Here \(v_i(t)=v(x_i,t)\), \(z_i(t)=z(x_i,t)\), and similarly for \(j,k\).
	
	The CJ and DJ configurations differ only in which characteristic modes are incoming to the vertex and which are outgoing from it.  For a CJ,
	\[
	\mathbf x_{\rm out}^{\rm CJ}=(v_i,v_j,z_k)^T,
	\qquad
	\mathbf x_{\rm in}^{\rm CJ}=(z_i,z_j,v_k)^T,
	\]
	whereas for a DJ,
	\[
	\mathbf x_{\rm out}^{\rm DJ}=(v_i,z_j,z_k)^T,
	\qquad
	\mathbf x_{\rm in}^{\rm DJ}=(z_i,v_j,v_k)^T.
	\]
    Columns vectors are denoted in bold. A direct rearrangement of \cref{eq:flux_cons_char,eq:cont_char} gives, in both cases,
	\[
	\mathbf x_{\rm out}=S(B_i,B_j,B_k)\mathbf x_{\rm in},
	\]
	with
	\begin{align}
		&S(B_i,B_j,B_k)
		=
		\frac{1}{B_i+B_j+B_k} \times \notag\\
		&\begin{bmatrix}
			B_i-B_j-B_k & 2B_j & 2B_k\\
			2B_i & -B_i+B_j-B_k & 2B_k\\
			2B_i & 2B_j & -B_i-B_j+B_k
		\end{bmatrix} \label{eq:scattering_DJCJ}
	\end{align}
	The derivation is given in \cref{app:cjdj}.  Thus the CJ/DJ distinction lies only in the assignment of  the sources, namely the edge-characteristic modes which are incoming or outgoing at the vertex; the scattering law itself is the same.  We now encode this bookkeeping in a vertex-local notation that applies to a vertex of arbitrary degree $d$.
	
	\begin{definition}[Edge orientation at a vertex]\label{def:orientation}
		Following the standard metric-graph convention~\cite{BerkoIntro,BerkoKuch}, each edge \(e\) is identified with an interval \([0,L_e]\) after choosing an arbitrary local
		coordinate \(x_e\). We take \(u_e>0\) to denote flow in the \(+x_e\) direction. The orientation sign of \(e\) at the vertex \(\nu\) is
		\[
		\epsilon_e
		=
		\begin{cases}
			+1, & \nu \text{ is the terminal endpoint } x_e=L_e,\\
			-1, & \nu \text{ is the initial endpoint } x_e=0.
		\end{cases}
		\]
		Equivalently, \(\epsilon_e=+1\) when the \(+x_e\) direction points toward \(\nu\). 
        Usually \(+x_e\) is aligned with the downstream direction, and \(\epsilon_e=+1\) identifies edges feeding \(\nu\) from upstream.
	\end{definition}
    Thus the choice of edge coordinate fixes only a local sign convention. The scattering law derived below does not depend on the arbitrary choice of edge origins; the signs are used to identify which characteristic mode (defined by \cref{uni-equ}) is incoming to, or outgoing from, the vertex. 
    
	\begin{definition}[Vertex input and output modes]
		\label{def:input_output_modes}
		Let \(v_e\) and \(z_e\) denote the upstream- and downstream-propagating characteristic modes on edge \(e\), following \cref{uni-equ}.  At a vertex \(\nu\), 
        record the incoming and outgoing mode-amplitudes through
		\begin{equation}
			\alpha_e
			=
			\begin{cases}
				z_e, & \epsilon_e=+1,\\
				v_e, & \epsilon_e=-1,
			\end{cases}
			\qquad
			\beta_e
			=
			\begin{cases}
				v_e, & \epsilon_e=+1,\\
				z_e, & \epsilon_e=-1.
			\end{cases}
			\label{eq:alpha_beta_def}
		\end{equation}
	\end{definition}
	Thus \(\alpha_e\) records propagation toward \(\nu\), while \(\beta_e\) propagation away from \(\nu\). For \(E_\nu=\{e_1,\dots,e_d\}\), collect these amplitudes into
	\begin{equation}
		\boldsymbol{\alpha}
		=
		(\alpha_{e_1},\dots,\alpha_{e_d})^{\rm T},
		\qquad
		\boldsymbol{\beta}
		=
		(\beta_{e_1},\dots,\beta_{e_d})^{\rm T}.
		\label{eq:xin_xout_def}
	\end{equation}
    Note here that $\boldsymbol{\alpha}$ corresponds to $\mathbf{x}_{\rm in}$ discussed in the special case of CJ and DJ (resp. $\boldsymbol{\beta}$ corresponds to $\mathbf{x}_{\rm out}$). We use the former from hereon.
    
	With the convention given in \cref{def:orientation}, the general Neumann--Kirchhoff  conditions at a vertex are
	\begin{subequations}
		\label{eq:general_NK}
		\begin{align}
			h_e &= h_{e'}, \,\, \forall e, e' \in E_\nu \label{eq:general_continuity}\\
			\sum_{e \in E_\nu} \epsilon_e B_e u_e &= 0. \label{eq:general_flux}
		\end{align}
	\end{subequations}
	We are now in position to derive the scattering matrix for a junction of arbitrary degree directly via the input and output modes. Since the scattering matrix (see \cref{eq:scattering_DJCJ}) is independent of \(g\) and \(H\), we set \(C_0=\sqrt{gH}=1\) and \(H=1\) below for notational simplicity.
	
	\begin{proposition}[Vertex scattering law]
		\label{prop:vertex_scattering}
		Consider a vertex of degree $d$ with incident edge set $E_\nu = \{e_1, \dots, e_d\}$ and positive widths $\{B_e\}_{e \in E_\nu}$. The Neumann--Kirchhoff conditions \cref{eq:general_NK} determine
		\begin{equation}
			\boldsymbol{\beta}=S\boldsymbol{\alpha},
			\label{eq:scattering_law}
		\end{equation}
		where the scattering matrix's entries are
		\begin{equation}
			S_{ab}
			=
			\frac{2B_b}{\sum_{e\in E_\nu}B_e}
			-
			\delta_{ab},
			\qquad a,b\in E_\nu.
			\label{eq:Sgeneraldegree}
		\end{equation}
		Equivalently, in matrix form,
		\begin{equation}
			S
			=
			\frac{2}{\sum_{e\in E_\nu}B_e}\mathbf 1\,\mathbf B^{\rm T}
			-
			I_d,
			\qquad
			\mathbf B=(B_e)_{e\in E_\nu}.
			\label{eq:S_matrix_form}
		\end{equation}
	\end{proposition}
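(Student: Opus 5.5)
We will rewrite the Neumann--Kirchhoff conditions \cref{eq:general_NK} directly in terms of the vertex amplitudes $\alpha_e,\beta_e$ of \cref{def:input_output_modes}, using the normalization $C_0=H=1$. In that normalization \cref{eq:uh_vz_transform} reads $u_e=z_e-v_e$ and $h_e=v_e+z_e$.

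\emph{Continuity condition.} Whatever the value of $\epsilon_e$, the pair $\{\alpha_e,\beta_e\}$ is just $\{v_e,z_e\}$ in some order, so $h_e=\alpha_e+\beta_e$. Condition \cref{eq:general_continuity} therefore says that $\alpha_e+\beta_e=\eta$ for all $e\in E_\nu$, where $\eta$ is a common value not yet known.

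\emph{Flux condition.} Here the orientation sign enters. If $\epsilon_e=+1$, then $\alpha_e=z_e$ and $\beta_e=v_e$, so $\epsilon_e u_e=z_e-v_e=\alpha_e-\beta_e$. If $\epsilon_e=-1$, then $\alpha_e=v_e$ and $\beta_e=z_e$, so $\epsilon_e u_e=-(z_e-v_e)=\alpha_e-\beta_e$. In both cases $\epsilon_eu_e=\alpha_e-\beta_e$. This identity is the key bookkeeping step: it shows that the convention of \cref{def:orientation} removes every trace of the arbitrary edge coordinates. Condition \cref{eq:general_flux} then becomes
\[
\sum_{e\in E_\nu}B_e(\alpha_e-\beta_e)=0 .
\]

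\emph{Solving for $\beta$.} Substitute $\beta_e=\eta-\alpha_e$ into the flux equation. This gives $\sum_e B_e(2\alpha_e-\eta)=0$, so
\[
\eta=\frac{2\sum_b B_b\alpha_b}{\sum_e B_e}.
\]
The denominator is nonzero because all widths are positive. Hence
\[
\beta_a=\eta-\alpha_a=\sum_b\Big(\frac{2B_b}{\sum_e B_e}-\delta_{ab}\Big)\alpha_b,
\]
which is \cref{eq:Sgeneraldegree}. Writing $\eta\mathbf 1=\frac{2}{\sum_e B_e}\mathbf 1\mathbf B^{\rm T}\boldsymbol\alpha$ gives the matrix form \cref{eq:S_matrix_form}.

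\emph{Uniqueness and consistency.} The argument also shows uniqueness. The $d+1$ unknowns $(\boldsymbol\beta,\eta)$ satisfy $d+1$ linear equations, and for given $\boldsymbol\alpha$ we obtained $\eta$, and then $\boldsymbol\beta$, uniquely. As a check, specializing to $d=3$ should reproduce \cref{eq:scattering_DJCJ} for both the CJ and DJ assignments.

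The only real obstacle is the sign bookkeeping in the flux condition, that is, confirming $\epsilon_eu_e=\alpha_e-\beta_e$ in both orientation cases. Once that identity holds, the remaining steps are a rank-one linear solve.
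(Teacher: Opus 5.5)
Your proof is correct and follows the paper's argument step for step. You rewrite continuity as $\alpha_e+\beta_e=W$, verify $\epsilon_e u_e=\alpha_e-\beta_e$ in both orientations, and solve the resulting rank-one system for the common junction value; your explicit two-case check of the orientation identity and your remarks on uniqueness and the nonvanishing denominator are small clarifications the paper leaves implicit.
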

	\begin{proof}
		By \cref{def:input_output_modes}, \(\{\alpha_e,\beta_e\}=\{v_e,z_e\}\) on each edge. Thus, $h_e = v_e + z_e=\alpha_e + \beta_e$, and continuity condition  \eqref{eq:general_continuity} becomes
		\begin{equation}
			\alpha_e + \beta_e = W \quad \text{for all } e \in E_\nu,
			\label{eq:cont_alpha_beta}
		\end{equation}
		where $W$ is the common value of $h_e$ at $\nu$.
		
		To impose condition \cref{eq:general_flux}, define the velocity directed into $\nu$ by $u_e^{\rm in} := \epsilon_e\, u_e$. Direct check from \eqref{eq:alpha_beta_def} and $u_e = z_e - v_e$ gives, in both orientations,
		\begin{equation}
			u_e^{\rm in} = \alpha_e - \beta_e,
			\label{eq:uin_alpha_beta}
		\end{equation}
		again, regardless of orientation. As a result, substituting \cref{eq:uin_alpha_beta} into \cref{eq:general_flux} yields
		\begin{equation}
			\sum_{e \in E_\nu} B_e\,(\alpha_e - \beta_e) = 0.
			\label{eq:flux_alpha_beta}
		\end{equation}
		
		Solving the system \cref{eq:cont_alpha_beta,eq:flux_alpha_beta}:  the first gives $\beta_e = W - \alpha_e$, and substituting into the second yields
		\[
		W = \frac{2\sum_{e\in E_\nu} B_e\,\alpha_e}{\sum_{e\in E_\nu} B_e}.
		\]
		Then for each $a \in E_\nu$,
		\begin{align*}
			\beta_a 
			&= W - \alpha_a \\
			&= \frac{2 \sum_{e\in E_\nu} B_e\, \alpha_e}{\sum_{e\in E_\nu} B_e} - \alpha_a \\
			&= \sum_b \left(\frac{2 B_b}{\sum_{e\in E_\nu} B_e}\right)\, \alpha_b 
			\;-\; \sum_b \delta_{ab}\, \alpha_b \\
			&= \sum_b \!\left(\frac{2 B_b}{\sum_{e\in E_\nu} B_e} - \delta_{ab}\right)\!\alpha_b.
		\end{align*}
		where the term in the parenthesis is precisely $S_{ab}$ given by  \cref{eq:Sgeneraldegree}. The matrix form follows naturally.
	\end{proof}
	It is useful to rewrite the scattering law in a \emph{weighted-mean} form.  For an incoming vector \(\boldsymbol{\alpha}=(\alpha_e)_{e\in E_\nu}\), define
	\[
	\overline{\alpha}_B
	=
	\frac{\sum_{e\in E_\nu}B_e\alpha_e}
	{\sum_{e\in E_\nu}B_e}.
	\]
	Then for each component of \ref{eq:scattering_law} we have
	\begin{equation}
		\beta_e
		=
		2\overline{\alpha}_B-\alpha_e,
		\qquad e\in E_\nu.
		\label{eq:weighted_mean_scattering}
	\end{equation}
	Physically, every outgoing edge sees a common, width-weighted ``junction state'' \(2\overline{\alpha}_B\), with the locally incoming amplitude removed from its own outgoing slot. Reflectionlessness for a given incoming pattern is therefore the question of whether that incoming pattern is a stationary point of this mean-and-subtract operation — the viewpoint to be formalized in \cref{prop:balance_rank}.
	
	In the unweighted case $B_e \equiv 1$, \cref{eq:Sgeneraldegree} reduces to $S_{ab} = 2/d - \delta_{ab}$, which is the bond scattering matrix of a Neumann quantum graph derived by Berkolaiko~\cite{BerkoIntro}, Eq.~(55). 
	
	\section{Source-relative balance}
	\label{sec:source_balance}
	
	We illustrate that on a Y-vertex, the same scattering matrix may produce different reflection behavior under two different choices of incoming data.
	\begin{example}
		Consider a balanced Y-vertex (\cref{2junctions}(A)) with channel widths $B_i=B_j=1/2$, and $B_k=1$. Per \cref{eq:scattering_DJCJ}, the scattering matrix $S$ for this case is 
		\[
			S= 
			\begin{bmatrix}
				-1/2 &1/2 & 1   \\
				1/2& -1/2 & 1 \\
				1/2 & 1/2 & 0  \\
			\end{bmatrix}.
		\]
		For synchronized incoming data \((a(t),a(t),0)^T\), this gives
		\[
        S\left[\begin{array}{c}
        a(t)\\
        a(t)\\
        0
        \end{array}\right]=\left[\begin{array}{c}
        0\\
        0\\
        a(t)
        \end{array}\right]
        \]
		so there is no reflection back into \(i\) or \(j\).  For a single incoming disturbance \((a(t),0,0)^T\),
		\[
        S\left[\begin{array}{c}
        a(t)\\
        0\\
        0
        \end{array}\right]=\left[\begin{array}{c}
        -a(t)/2\\
        a(t)/2\\
        a(t)/2
        \end{array}\right]
        \]
		so the same vertex reflects when only one source edge is active.
	\end{example}
	
	The example shows that reflectionlessness depends on both the widths and the active source designation.  We therefore define a reflectionless vertex relative to a source set. Let \(\nu\) be a vertex with incident edge set \(E_\nu=\{e_1,\dots,e_d\}\), where each edge \(e\in E_\nu\) has positive width \(B_e>0\).
	\begin{definition}[Source set]
		Let \(\alpha_e(t)\) denote the characteristic mode propagating toward
		\(\nu\) along edge \(e\).  At a given scattering time \(t\), the source set
		is
		\[
		\Sigma(t)=\{e\in E_\nu:\alpha_e(t)\neq 0\}.
		\]
		When the active set is fixed during the scattering event under discussion,
		we write it simply as \(\Sigma\).  Edges in \(E_\nu\setminus \Sigma\) are
		quiescent incoming edges for that event.
	\end{definition}
	
	\begin{definition}[Balanced vertex]
		\label{def:balanced_vertex}
		Let $\nu$ be a vertex with incident edge set $E_\nu$, positive widths $\{B_e\}_{e \in E_\nu}$, and source set $\Sigma \subseteq E_\nu$. The vertex $\nu$ is \emph{balanced with respect to $\Sigma$} if
		\begin{equation}
			\sum_{e \in \Sigma} B_e
			\;=\;
			\sum_{e \in E_\nu \setminus \Sigma} B_e.
			\label{eq:balance_def}
		\end{equation}
        The set $\Sigma$ in this definition is taken at a fixed scattering time; we suppress the argument $t$ when no ambiguity arises.  Because the active source set $\Sigma(t)$ may change as new waves arrive at $\nu$, the property ``$\nu$ is balanced with respect to $\Sigma(t)$'' is itself a time-local statement.  The two-path island in \cref{sec:cycle} is the simplest setting in which this temporal dependence has observable consequences.
	\end{definition}
	For a Y-vertex \(E_\nu=\{i,j,k\}\), the balance condition gives the three alternatives
	\[
	B_i=B_j+B_k,\qquad
	B_j=B_i+B_k,\qquad
	B_k=B_i+B_j.
	\]
	Thus the example \(B_i=B_j=1/2\), \(B_k=1\) is balanced with respect to \(\{i,j\}\) (i.e., a CJ), equivalently with respect to \(\{k\}\) (i.e., a DJ), but not with respect to any other source set.
	
	Given a source set \(\Sigma\), decompose the full scattering matrix into blocks according to \(E_\nu=\Sigma\cup(E_\nu\setminus\Sigma)\).  The source-source block \(S_{\Sigma,\Sigma}\) maps incoming amplitudes on \(\Sigma\) to outgoing amplitudes on the same source edges.  Thus \(S_{\Sigma,\Sigma}\alpha_\Sigma=0\) is precisely the no-reflection condition back into the active source set, formalized below.
	
	\begin{proposition}[Balance relative to source set]
		\label{prop:balance_rank}
		The vertex $\nu$ is balanced with respect to $\Sigma$ if and only if the source-source block $S_{\Sigma, \Sigma}$ has a non-trivial kernel, namely
		\begin{equation}
			\sum_{e \in \Sigma} B_e = \sum_{e \in E_\nu \setminus \Sigma} B_e\;\Longleftrightarrow\; \det S_{\Sigma, \Sigma} = 0.
			\label{eq:balance_rank}
		\end{equation}
		At balance, the kernel of $S_{\Sigma, \Sigma}$ contains the constant vector $\mathbf{1}_\Sigma = (1, \dots, 1)^T \in \mathbb{R}^{|\Sigma|}$: synchronized, equal-amplitude incoming data on all edges of $\Sigma$ produce no reflection back into $\Sigma$.
	\end{proposition}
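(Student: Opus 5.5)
By \cref{prop:vertex_scattering}, the source-source block is a rank-one perturbation of minus the identity. Write $k=|\Sigma|$, $B_{\rm tot}=\sum_{e\in E_\nu}B_e$, $B_\Sigma=\sum_{e\in\Sigma}B_e$, and $\mathbf B_\Sigma=(B_e)_{e\in\Sigma}$. Restricting \cref{eq:S_matrix_form} to rows and columns in $\Sigma$ gives
\[
S_{\Sigma,\Sigma}=\frac{2}{B_{\rm tot}}\mathbf 1_\Sigma\mathbf B_\Sigma^{\rm T}-I_k .
\]
The plan is to compute its determinant with the matrix determinant lemma, $\det(I+\mathbf u\mathbf w^{\rm T})=1+\mathbf w^{\rm T}\mathbf u$, after factoring out $-1$:
\[
\det S_{\Sigma,\Sigma}=(-1)^k\det\!\Big(I_k-\frac{2}{B_{\rm tot}}\mathbf 1_\Sigma\mathbf B_\Sigma^{\rm T}\Big)=(-1)^k\Big(1-\frac{2B_\Sigma}{B_{\rm tot}}\Big).
\]
Alternatively one can avoid the lemma. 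Any rank-one matrix $\mathbf u\mathbf w^{\rm T}$ has eigenvalue $\mathbf w^{\rm T}\mathbf u$ on $\mathbf u$ and eigenvalue $0$ on $\mathbf w^{\perp}$. Hence $S_{\Sigma,\Sigma}$ has eigenvalue $-1$ with multiplicity $k-1$ and eigenvalue $2B_\Sigma/B_{\rm tot}-1$ on $\mathbf 1_\Sigma$, which gives the same determinant.

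Then $\det S_{\Sigma,\Sigma}=0$ if and only if $2B_\Sigma=B_{\rm tot}$, i.e.\ $B_\Sigma=B_{\rm tot}-B_\Sigma=\sum_{e\in E_\nu\setminus\Sigma}B_e$. This is exactly \cref{eq:balance_def}, and it proves \cref{eq:balance_rank}. For a square matrix, a nontrivial kernel is equivalent to vanishing determinant, so the "non-trivial kernel" phrasing also follows.

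For the kernel statement, apply the weighted-mean form \cref{eq:weighted_mean_scattering} to $\boldsymbol\alpha$ equal to $1$ on $\Sigma$ and $0$ off $\Sigma$. Then $\overline\alpha_B=B_\Sigma/B_{\rm tot}=1/2$ at balance, so $\beta_e=1-1=0$ for every $e\in\Sigma$. Thus $S_{\Sigma,\Sigma}\mathbf 1_\Sigma=0$. The same holds with a common time profile $a(t)$ by linearity.

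The only subtle points are edge cases. If $\Sigma=E_\nu$, the right side of \cref{eq:balance_def} is $0$, which cannot equal $B_\Sigma>0$; correspondingly $\det S=(-1)^d(1-2)\neq0$, so the equivalence still holds. The empty set is excluded, since $S_{\Sigma,\Sigma}$ is then undefined. I would note both in passing. I would also remark that the eigenvalue computation shows the kernel is exactly one-dimensional, spanned by $\mathbf 1_\Sigma$, which strengthens "contains". The main obstacle is therefore only careful bookkeeping of the rank-one structure; no genuine difficulty is expected.
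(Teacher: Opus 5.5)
Your proof is correct and takes essentially the paper's route. The paper also relies on the rank-one form $S_{\Sigma,\Sigma}=-I_m+\frac{2}{T}\mathbf 1_\Sigma\mathbf B_\Sigma^{\rm T}$: it checks $S_{\Sigma,\Sigma}\mathbf 1_\Sigma=(2T_\Sigma/T-1)\mathbf 1_\Sigma$ for the forward direction and uses $\det S_{\Sigma,\Sigma}=(-1)^m(1-2T_\Sigma/T)$ for the converse. You use the determinant for both directions and supply the justification the paper leaves implicit (the matrix determinant lemma or the eigenvalue count), and your remark that the kernel is exactly $\operatorname{span}\{\mathbf 1_\Sigma\}$ agrees with the paper's Remark on the transparent source mode.
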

	\begin{proof}
		Write $T = \sum_{e \in E_\nu} B_e$ and $T_\Sigma = \sum_{e \in \Sigma} B_e$. For $a, b \in \Sigma$,
		\[
		(S_{\Sigma, \Sigma})_{ab} = \frac{2 B_b}{T} - \delta_{ab}.
		\]
		Applying $S_{\Sigma, \Sigma}$ to $\mathbf{1}_\Sigma$, we obtain the identity
		\[
		(S_{\Sigma, \Sigma}\, \mathbf{1}_\Sigma)_a
		= \sum_{b \in \Sigma} \left( \frac{2 B_b}{T} - \delta_{ab} \right)
		= \frac{2 T_\Sigma}{T} - 1.
		\]
		For forward direction ``$\Longrightarrow$'', assume $\sum_{e \in \Sigma} B_e = \sum_{e \in E_\nu \setminus \Sigma} B_e$. Equivalently, 
		\[
		T_\Sigma = T - T_\Sigma \iff 2 T_\Sigma = T. 
		\]
		Therefore, $S_{\Sigma,\Sigma}\mathbf{1}_\Sigma=0$, i.e., $S_{\Sigma,\Sigma}$ has a non-trivial kernel. This proves the forward direction ``$\Longrightarrow$''.
		
		For the converse direction ``$\Longleftarrow$'', assume $\det S_{\Sigma, \Sigma} = 0$. Let $m=|\Sigma|$ and write
		\[
		\mathbf{B}_\Sigma=(B_e)_{e\in\Sigma}.
		\]
		From \cref{eq:S_matrix_form}, we find
		\[
		\det S_{\Sigma,\Sigma}
		=
		(-1)^m\left(1-\frac{2T_\Sigma}{T}\right).
		\]
		Thus $\det S_{\Sigma,\Sigma}=0$ implies $2T_\Sigma=T$, which is equivalent to the  condition that $\nu$ is balanced with respect to $\Sigma$. This proves the converse direction ``$\Longleftarrow$''.
	\end{proof}

    Operationally, \cref{prop:balance_rank} says that vertex balance is exactly the condition under which the source-source reflection block $S_{\Sigma,\Sigma}$ annihilates the synchronized direction \(\mathbf 1_\Sigma\); the following remark records the local mechanism we will use repeatedly in \cref{sec:cycle}.
    
	\begin{remark}[Transparent source mode]\label{rem:transparent_source}
		Let \(m=|\Sigma|\).  Since
		\[
		S_{\Sigma,\Sigma}
		=
		-I_m+\frac{2}{T}\mathbf 1_\Sigma \mathbf B_\Sigma^T,
		\qquad
		T=\sum_{e\in E_\nu}B_e,
		\]
		every vector \(q\) satisfying \(\mathbf B_\Sigma^Tq=0\) is an eigenvector with eigenvalue \(-1\), while \(\mathbf 1_\Sigma\) has eigenvalue \(2T_\Sigma/T-1\).  At balance, \(2T_\Sigma=T\), so the transparent source mode is the synchronized direction \(\mathbf 1_\Sigma\).  Thus a balanced multi-source vertex is transparent exactly to the \textbf{synchronized} component on \(\Sigma\). Any departure from  $\mathbf{1}_\Sigma$ projects onto the $(-1)$-eigenspace and is reflected back into $\Sigma$ with sign flipped; this is the local mechanism that generates the  upstream reflection on the cyclic networks analyzed below.
	\end{remark}
    
    A numerical illustration of source-relative balance beyond the Y-vertex example above is given in \cref{fig:deg4} of \cref{sec:numerics}: a degree-4 vertex with a two-edge source set $\Sigma=\{e_1,e_2\}$, satisfying the balance condition per \cref{eq:balance_def}, driven by synchronized incoming pulses on $\Sigma$, generates no reflection back into $\Sigma$ and transmits fully into the complementary edges, in agreement with \cref{rem:transparent_source}.
    
	\section{Balanced islands and first-generation upstream reflection}
	\label{sec:cycle}
	
	The source-relative balance condition characterizes transparency at a single vertex.  We now consider what happens when balanced vertices are combined into a cyclic subgraph.  The simplest example is an island: a diverging junction splits an incoming pulse into several paths, and a converging junction later recombines the partial waves, per \cref{Islands}.  Local balance controls the scattering at each junction, while the path lengths control whether the partial waves arrive at the closing junction in the synchronized source mode.
	
	We focus on the first-generation upstream reflection, denoted by \(\beta_0^{(1)}(t)\): the group of delayed upstream components generated when the initially split waves first recombine at \(\nu_C\) and scatter back to the upstream boundary.  The full reflected signal at the upstream boundary is an infinite sum \(\beta_0(t)=\sum_{n\ge 1}\beta_0^{(n)}(t)\), where \(\beta_0^{(n)}\) collects the delayed amplitudes produced by (n) successive round trips, but only the first generation admits the closed-form formulas below. Because internal measurements in transient-wave diagnostics and control are typically sparse or infeasible (see~\cite{Che2022LoopedFramework,Che2022LoopedRegularization,Che2022ModifiedTreeBoundary,Reddy2011GasLeakEstimator,Mynard2020PressureFlowWaves,DosSantosPrieur2008OpenChannelBoundaryControl,Janon2016OpenChannelSensitivity}), \(\beta_0^{(1)}\) is also the boundary-observable signal: a nonzero \(\beta_0^{(1)}\) is sufficient to rule out transparency, and its amplitude is a tractable boundary-facing measure of the cyclic-subnetwork reflection.
    
	\subsection{Two-path Islands}
	Consider the island graph $G$ in \cref{Islands}: two semi-infinite boundary edges $e_0$ (upstream) and $e_\infty$ (downstream) of widths $B_0, B_\infty$, two finite arc edges $e_T, e_B$ of widths $B_T, B_B$ and lengths $L_T, L_B$, and two interior vertices $\nu_D$ (where $e_0$ meets $e_T, e_B$) and $\nu_C$ (where $e_T, e_B$ meet $e_\infty$). Since (normalized) wave speed $C_0 = 1$ is uniform across the network, the arc lengths $L_T$, $L_B$ also represent travel times.
	
	For an incoming pulse on $e_0$ alone, the boundary source set is $\Sigma = \{e_0\}$. This induces source designations at the interior vertices: $\Sigma_D = \{e_0\}$ at $\nu_D$ and $\Sigma_C = \{e_T, e_B\}$ at $\nu_C$. Vertex-balance (\cref{def:balanced_vertex}) requires $B_T + B_B = B_0$ at $\nu_D$ and $B_T + B_B = B_\infty$ at $\nu_C$, and we therefore impose $B_0 = B_\infty = B_T + B_B$ throughout.
	
	\begin{proposition}[Two-path island synchronization]
		\label{prop:island_sync}
		Suppose vertex-balance holds at both \(\nu_D\) and \(\nu_C\), so that
		\[
		B_0=B_\infty=B_T+B_B.
		\]
		For a nonzero non-periodic incoming pulse on \(e_0\), with amplitude \(a(t)\) at the vertex, the two-path
		island is transparent for all \(t\) if and only if
		\[
		L_T=L_B.
		\]
		If \(L_T\neq L_B\), the first-generation upstream reflection on \(e_0\) is
		\begin{gather}
			\beta_0^{(1)}(t)
			=
			\frac{B_TB_B}{(B_T+B_B)^2}
			\times \notag \\ \left[
			2a(t-L_T-L_B)-a(t-2L_T)-a(t-2L_B)
			\right].
			\label{eq:beta0_two_path}
		\end{gather}
	\end{proposition}
	The proof is deferred to \cref{app:prop3}. From this formula, unequal arc lengths produce a nonzero first-generation upstream reflection for a generic localized pulse, whereas equal arc lengths keep the arriving source vector in the transparent kernel and generate no back-reflection. Numerical visualizations of the synchronized and unsynchronized cases are given in \cref{fig:island_sims} of \cref{sec:numerics}.
	
	\subsection{Multi-path islands}
	\label{subsec:multipath}
	The two-path formula given by \cref{eq:beta0_two_path} suggests a natural extension.  Suppose the island has \(N\) parallel arc edges \(e_1,\dots,e_N\) from \(\nu_D\) to \(\nu_C\), with widths \(B_1,\dots,B_N\) and lengths \(L_1,\dots,L_N\).  Let
	\[
	A=\sum_{k=1}^N B_k.
	\]
	We impose at both junctions:
	\begin{equation}
		B_0=B_\infty=A.
		\label{eq:N_balance}
	\end{equation}
	An incoming pulse \(a(t)\) on \(e_0\) is then split by \(\nu_D\) into identical outgoing characteristic amplitudes on all \(N\) arcs. The two-path formula per \cref{eq:beta0_two_path} generalizes to the following result.
	\begin{proposition}[First-generation upstream reflection for an \(N\)-path island]
		\label{prop:N_path_reflection}
		Under the balance condition \cref{eq:N_balance}, the first-generation upstream reflection on \(e_0\) is
		\begin{gather}
			\beta_0^{(1)}(t)
			=
			\frac{1}{A^2}
			\sum_{1\le k<j\le N}
			B_kB_j \times \notag \\
			\left[
			2a(t-L_k-L_j)-a(t-2L_k)-a(t-2L_j)
			\right].
			\label{eq:beta0_N_time}
		\end{gather}
		Equivalently, with the Fourier transform $\widehat g(\omega):=\int_{\mathbb R} g(t)\,e^{-i\omega t}\,dt$, and writing
        \begin{equation}
        \pi_k=\frac{B_k}{A},
        \qquad
        P(\omega)=\sum_{k=1}^N \pi_k e^{-i\omega L_k},
        \label{eq:frequency_P}
        \end{equation}
        the time-domain identity \cref{eq:beta0_N_time} is equivalent to the pointwise frequency-domain identity
        \begin{equation}
        \widehat{\beta}_0^{(1)}(\omega)
        \;=\;
        \widehat a(\omega)
        \left[
        P(\omega)^2-P(2\omega)
        \right],
        \qquad \omega\in\mathbb R.
        \label{eq:beta0_N_freq}
        \end{equation}
	\end{proposition}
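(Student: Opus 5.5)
The plan is to trace the pulse through the three scattering events that make up one round trip, using the vertex scattering law of \cref{prop:vertex_scattering} in its weighted-mean form \cref{eq:weighted_mean_scattering}. At each vertex we fix which incoming amplitudes are nonzero during the event and read off the outgoing ones. Since $C_0=1$, transport along arc $e_k$ is a pure delay by $L_k$. The degenerate case $N=1$ has an empty sum and $P(\omega)^2-P(2\omega)=0$, so it is consistent; we take $N\ge 2$.

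\emph{Step 1 (splitting at $\nu_D$).} The vertex $\nu_D$ has edges $e_0,e_1,\dots,e_N$, total width $T=B_0+A=2A$, and only $\alpha_0=a(t)$ is nonzero. Then $2\overline{\alpha}_B=2B_0a/(2A)=a$, so $\beta_0=a-a=0$ and $\beta_k=a(t)$ on every arc. This confirms the claim that the pulse is split into identical amplitudes with no immediate reflection.

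\emph{Step 2 (recombination at $\nu_C$).} The vertex $\nu_C$ has edges $e_1,\dots,e_N,e_\infty$ and $T=2A$. The incoming data are $\alpha_k=a(t-L_k)$, with $\alpha_\infty=0$ at first generation. Writing $m(t):=\sum_j\pi_j a(t-L_j)$, the outgoing amplitude sent back up arc $k$ is
\[
\beta_k=m(t)-a(t-L_k).
\]
This is exactly the projection off the synchronized mode described in \cref{rem:transparent_source}.

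\emph{Step 3 (return to $\nu_D$).} After the additional delay $L_k$ on arc $k$, the incoming data at $\nu_D$ are $\alpha_k=m(t-L_k)-a(t-2L_k)$, with $\alpha_0=0$ for this generation. The outgoing amplitude on $e_0$ is $2\overline{\alpha}_B-\alpha_0=\sum_k\pi_k\alpha_k$. This gives
\[
\beta_0^{(1)}(t)=\sum_{j,k}\pi_j\pi_k\,a(t-L_j-L_k)-\sum_k\pi_k\,a(t-2L_k).
\]
Taking Fourier transforms, a delay $\tau$ becomes multiplication by $e^{-i\omega\tau}$. The double sum then factors as $\widehat a\,P(\omega)^2$, and the single sum gives $\widehat a\,P(2\omega)$, which yields \cref{eq:beta0_N_freq}.

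\emph{Step 4 (time-domain form).} To reach \cref{eq:beta0_N_time}, use $\sum_j\pi_j=1$ to write the single sum as a double sum:
\[
\sum_k\pi_k a(t-2L_k)=\sum_{j,k}\pi_j\pi_k\,\tfrac12\bigl[a(t-2L_k)+a(t-2L_j)\bigr].
\]
The combined summand then vanishes on the diagonal $j=k$. Pairing the terms $(j,k)$ and $(k,j)$ gives $\pi_k\pi_j[2a(t-L_k-L_j)-a(t-2L_k)-a(t-2L_j)]$ for each pair $k<j$, and $\pi_k\pi_j=B_kB_j/A^2$. The equivalence of the two forms follows from injectivity of the Fourier transform. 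Setting $N=2$ recovers \cref{eq:beta0_two_path} as a check.

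The main obstacle is not the algebra but making the bookkeeping of ``first generation'' precise. The argument must justify the following:
\begin{itemize}
\item setting $\alpha_\infty=0$ at $\nu_C$;
\item setting $\alpha_0=0$ at $\nu_D$ on the return;
\item discarding the parts of the Step 3 scattering that re-enter the arcs, which belong to later generations.
\end{itemize}
By linearity, one can define $\beta_0^{(n)}$ as the sum over paths in the scattering expansion with exactly $n$ traversals $\nu_C\to\nu_D$. I would state this definition explicitly and then observe that Steps 1--3 enumerate precisely the paths with $n=1$.
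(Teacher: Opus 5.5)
Your proof is correct and is essentially the paper's: the same three scattering events (reflectionless split at $\nu_D$, weighted-mean recombination at $\nu_C$, return to $\nu_D$ with $e_0$ and $e_\infty$ quiescent) give the same intermediate expression $\sum_{j,k}\pi_j\pi_k\,a(t-L_j-L_k)-\sum_k\pi_k\,a(t-2L_k)$, and your symmetrization via $\sum_j\pi_j=1$ is equivalent to the paper's diagonal/off-diagonal split. The one real difference is in the frequency-domain step: you transform this unsymmetrized expression, so $P(\omega)^2-P(2\omega)$ appears at once, whereas the paper transforms the pairwise form and then uses the independent-copy identity $\mathbb{E}_{\mu}[(f(X)-f(Y))^2]=2(P(2\omega)-P(\omega)^2)$ to recollapse it; your order is shorter, and your path-expansion definition of $\beta_0^{(n)}$ is more explicit than the paper's bookkeeping, which simply sets the quiescent inputs to zero.
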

	We defer the proof to \cref{app:prop_4}.  In \cref{sec:numerics}, the three-path case is checked by comparing \cref{eq:beta0_N_time} with the numerically computed first-generation upstream reflection; see \cref{fig:N3_time_domain}.
	
	\begin{corollary}[Cancellation mechanisms for an $N$-path island]
		\label{cor:codim_lengths}
        Let \(\mathcal{H}(\omega):=P(\omega)^2-P(2\omega) \)  where $P(\omega)$ is defined in \cref{eq:frequency_P}. Then the following statements hold.
        \begin{enumerate}
            \item $\mathcal H(\omega)=0 \ \text{for all } \omega\in\mathbb R$ $\Longleftrightarrow$ $L_1=L_2=\cdots=L_N$.
            \item For unequal lengths $\{L_k\}$, suppose $L_k - L_1 \in d\mathbb{Z}$ for some $d>0$, for $k=1,\dots,N$ (referred to as the \emph{commensurability} condition from hereon). Then for every nonzero integer $m$, 
                \[
                \omega_m=\frac{2\pi m}{d}
                \Longrightarrow
                \mathcal H(\omega_m)=0,
                \]
                independent of $\pi_k$.
        \end{enumerate}
	\end{corollary}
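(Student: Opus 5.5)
The plan is to rewrite $\mathcal H$ in a form that exposes its dependence on the \emph{differences} of the phases $e^{-i\omega L_k}$. Set $x_k(\omega):=e^{-i\omega L_k}$. Then $P(2\omega)=\sum_k\pi_k x_k^2$, and $\sum_k\pi_k=1$ by \cref{eq:N_balance}. A weighted-variance identity gives
\begin{equation*}
\sum_{k,j}\pi_k\pi_j\,(x_k-x_j)^2 \;=\; 2\sum_k\pi_k x_k^2-2\Big(\sum_k\pi_k x_k\Big)^2 ,
\end{equation*}
so that
\begin{equation*}
\mathcal H(\omega)\;=\;-\tfrac12\sum_{k,j}\pi_k\pi_j\,\big(x_k(\omega)-x_j(\omega)\big)^2 .
\end{equation*}
This single identity drives both parts. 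It is the frequency-domain counterpart of the pairwise structure already visible in \cref{eq:beta0_N_time}, so one may also reach it by transforming that formula term by term.

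\textbf{Part 2.} Assume $L_k=L_1+n_kd$ with $n_k\in\mathbb Z$, and take $\omega_m=2\pi m/d$. Then
\begin{equation*}
x_k(\omega_m)=e^{-i\omega_mL_1}\,e^{-2\pi i m n_k}=x_1(\omega_m)\qquad\text{for every }k.
\end{equation*}
Every difference $x_k-x_j$ therefore vanishes, and $\mathcal H(\omega_m)=0$ no matter what the weights $\pi_k$ are. The same computation proves the ``$\Leftarrow$'' direction of Part 1: if all $L_k$ are equal, then $x_k=x_j$ for every $\omega$.

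\textbf{Part 1, ``$\Rightarrow$'' (the main step).} I would not try to extract this from the sum-of-squares form, because the squares are complex and need not have a sign. Instead, expand $\mathcal H$ directly as a finite exponential sum:
\begin{equation*}
\mathcal H(\omega)=\sum_{k,j}\pi_k\pi_j\,e^{-i\omega(L_k+L_j)}-\sum_k\pi_k\,e^{-2i\omega L_k}.
\end{equation*}
The key fact is that the functions $\omega\mapsto e^{-i\omega c}$ with distinct real $c$ are linearly independent on $\mathbb R$. One way to see this is by uniqueness of the Fourier transform of finite atomic measures; another is by averaging $\mathcal H(\omega)e^{i\omega c}$ over $[-R,R]$ and letting $R\to\infty$. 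Consequently, $\mathcal H\equiv0$ forces the coefficient of every exponent to vanish.

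Look at the largest exponent $c^\ast=2L_{\max}$, where $L_{\max}=\max_kL_k$, and let $\mathcal M=\{k: L_k=L_{\max}\}$ with $p=\sum_{k\in\mathcal M}\pi_k\in(0,1]$. In the first sum, $L_k+L_j=2L_{\max}$ only when both $k,j\in\mathcal M$, which contributes $p^2$. In the second sum, only $k\in\mathcal M$ contribute, giving $p$. The coefficient of $e^{-i\omega c^\ast}$ is therefore
\begin{equation*}
p^2-p=-p(1-p),
\end{equation*}
and this vanishes only if $p=1$. Since every $\pi_k>0$, $p=1$ means $\mathcal M=\{1,\dots,N\}$, that is, $L_1=\cdots=L_N$.

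I expect the only delicate points to be stating the linear-independence lemma cleanly and checking that no other pair $(k,j)$ can produce the exponent $2L_{\max}$. The second point is immediate, because $L_k+L_j\le 2L_{\max}$ with equality only when both lengths are maximal.
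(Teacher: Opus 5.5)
Your proof is correct. Part 2 and the ``$\Leftarrow$'' direction of Part 1 are essentially the paper's argument. For Part 2, the paper uses the half-angle identity to write $\mathcal H(\omega)=4\sum_{j<k}\pi_j\pi_k\sin^2\bigl(\omega(L_j-L_k)/2\bigr)e^{-i\omega(L_j+L_k)}$ and then shows every sine vanishes at $\omega_m$. You instead observe directly that all phases $e^{-i\omega_m L_k}$ coincide. Yours is shorter, while the paper's version also produces the pair decomposition it reuses elsewhere (double zeros, the $N=2$ modulus formula).

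The genuine difference is in Part 1 ``$\Rightarrow$''. The paper reads $P$ as the characteristic function of a path-delay variable $X$ with law $\sum_k\pi_k\delta_{L_k}$. It then uses the uniqueness theorem to get $X+Y\stackrel{d}{=}2X$ for an independent copy $Y$, and compares variances: $2\operatorname{Var}(X)=4\operatorname{Var}(X)$. You use the same underlying uniqueness fact, namely linear independence of $e^{-i\omega c}$ for distinct $c$ (equivalently, uniqueness of Fourier transforms of finite atomic measures). But you then compare the mass of the extremal atom: $2L_{\max}$ carries weight $p^2$ in the law of $X+Y$ and weight $p$ in that of $2X$.

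Your route is purely algebraic and needs no moments or probabilistic language. It also exhibits an explicit nonzero coefficient $-p(1-p)$ of $e^{-2i\omega L_{\max}}$ whenever the lengths differ. The paper's route fits the probabilistic framework it develops, and it ties the broadband criterion to $\operatorname{Var}(X)$, the same quantity that governs the low-frequency behaviour. In fact, the paper's expansion $\mathcal H(\omega)=\omega^2\operatorname{Var}(X)+O(\omega^3)$ gives a third proof that needs no uniqueness theorem at all: vanishing of $\mathcal H$ on any neighbourhood of $\omega=0$ already forces $\operatorname{Var}(X)=0$, hence equal lengths since every $\pi_k>0$.

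One small correction: $\sum_k\pi_k=1$ comes from the definition $\pi_k=B_k/A$ with $A=\sum_kB_k$, not from the balance condition $B_0=B_\infty=A$.
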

    We defer the proof to \cref{app:cor1}. The proof of \cref{prop:N_path_reflection} in \cref{app:prop_4} introduces a probabilistic framework that makes the converse direction in part 1 of this corollary more accessible. The corollary separates exact broadband transparency (part 1) from a periodic family of isolated transparency frequencies $\omega_m=2\pi m/d$ (part 2). Note that the part-2 mechanism is dictated solely by commensurability of the path-length differences, and the resulting frequencies are \emph{independent} of the width weighting $\pi_k$.

    We stress that for general lengths and widths -- outside the commensurability hypothesis of part 2 -- the algebraic structure of $\mathcal H(\omega)$ depends on the number of paths $N$. To investigate this dependence, we use the fact that 
    \begin{equation}
    \mathcal H(\omega) \;=\; 4\sum_{1\le j<k\le N}\pi_j\pi_k\,\sin^2\!\Bigl(\tfrac{\omega(L_j-L_k)}{2}\Bigr)\,e^{-i\omega(L_j+L_k)},
    \label{eq:H_pair_decomp}
    \end{equation}
    (derived in \cref{app:cor1}) in which each pair $\{j,k\}$ contributes a non-negative term (via the sine squared term) set by the length difference $L_j-L_k$ and the unit-modulus complex exponential by the length sum $L_j+L_k$. Cancellation of $\mathcal H$ at a given $\omega$ thus arises either when every sine squared vanishes simultaneously -- the commensurability mechanism of part 2 -- or when the weighted complex-exponential sum vanishes through phase cancellation. When $N=2$, the double sum collapses into a single term, and thus the magnitude of $\mathcal{H}$ reads
    \[
    |\mathcal H(\omega)| \;=\; 4\pi_1\pi_2\,\sin^2\!\Bigl(\tfrac{\omega(L_2-L_1)}{2}\Bigr) \qquad (N=2),
    \]
    with zeros only at the commensurability frequencies of part 2. At $N=3$, phase cancellation becomes algebraically possible; for $N\geq 4$, an additional hybrid mechanism arises. Both detailed structures for $N=3,4$, including a graphical illustration of these configurations in \cref{fig:N3_reflection_factor}, are collected in \cref{app:frequency_cancellation}.
  	
	\section{Conclusions and extensions}
	\label{sec:conclusion}
	
	We have defined long-wave transparency on channel networks in terms of the vertex scattering matrix.  For a vertex of arbitrary degree, the channel-width-weighted scattering law coefficients take the closed form
	\[
	S_{ab}=\frac{2B_b}{\sum_e B_e}-\delta_{ab}.
	\]
	This formula unifies converging and diverging junctions with the same degree: the scattering matrix is the same, while the assignment of incoming and outgoing characteristic components depends on the local orientation.  Balance is therefore most naturally defined relative to a source set.  For a prescribed source set \(\Sigma\), balance is equivalent to rank-deficiency of the source-source submatrix reflection block \(S_{\Sigma,\Sigma}\), and the transparent source mode is the synchronized direction \(\mathbf 1_\Sigma\).
	
	This local result does not by itself imply transparency of a cyclic network. For the two-path island, local balance at both junctions removes reflection only when the two arc travel times are equal.  More generally, for an \(N\)-path balanced island, the first-generation upstream reflection, observed on the upstream leading-edge, has frequency factor \(P(\omega)^2-P(2\omega)\), where
	\[
	P(\omega)=\sum_{k=1}^N \pi_k e^{-i\omega L_k}, \quad \pi_k = \frac{B_k}{\sum_jB_j}.
	\]
	Exact broadband cancellation forces all path lengths, equivalently all travel times in the present normalization, to agree. Thus local balance is a vertex condition, while transparency also requires synchronization across the paths. This path-delay formulation also suggests a possible probabilistic viewpoint on more general networks (see \cref{app:prop_4} for a discussion), in which width- or admittance-weighted delay measures are propagated through repeated scattering events.

    Although the derivation here uses the linear shallow-water Stoker coupling, the same scattering viewpoint can be applied to other linearized wave networks once edge propagation has been diagonalized into characteristic propagating modes and junctions have been represented by a scattering law. In such settings, the width-weights in the scattering matrix $S$ should be replaced by the appropriate physical admittances, and the synchronization condition should be expressed in terms of the  travel times of the characteristic propagating modes.
    
    Several extensions of the present hyperbolic model remain natural. For nonzero subcritical base flow, $0<U<C_0$, the upstream and downstream characteristic speeds have different magnitudes. The linearized discharge at a junction includes the advective contribution from the base flow so the width-weighted flux condition in \cref{eq:general_flux} must be modified. Furthermore, extending the source-relative balance and island formulas to this case therefore requires rederiving the vertex scattering law with the modified characteristic variables and flux condition. Another natural step is to also allow edge-dependent depths $H_k$ with $U=0$. Then the wave speed on edge $k$ is $c_k=\sqrt{gH_k}$, the travel time is $L_k/c_k$, and the natural balance weights should be hydraulic admittances, such as $B_k c_k$, rather than widths alone. These extensions would test whether the source-relative balance and synchronization principles persist beyond the constant-depth normalization used here.

    \section*{Acknowledgements}
    B. Gu and A. Nachbin would like to thank the Harold J. Gay Professorship in the Department of Mathematical Sciences at Worcester Polytechnic Institute.
        
	\appendix
	
	\section{Converging and Diverging Junction Scattering Equivalence}\label{app:cjdj}
	
	We give the explicit calculation that produces the unified scattering matrix $S(B_i, B_j, B_k)$ in \cref{sec:model} from the converging and diverging configurations of  \cref{2junctions}, showing that the two configurations yield the same matrix despite having different identification of incoming and outgoing modes.
	
	The two configurations (per \cref{2junctions}) differ only in which mode on each edge propagates toward the vertex and which propagates away: at a CJ the outgoing modes are $(v_i,v_j,z_k)$ and the incoming modes are $(z_i,z_j,v_k)$; at a DJ the outgoing modes are $(v_i,z_j,z_k)$ and the incoming modes are $(z_i,v_j,v_k)$. Inserting these labels into the Y-junction conditions \cref{eq:flux_cons_char,eq:cont_char} and rearranging into ``outgoing on the left, incoming on the right'' yields a constraint system
	\begin{equation}
		D_{\rm out}\, \mathbf{x}_{\rm out} = D_{\rm in}\, \mathbf{x}_{\rm in},
		\label{TypeICCs}
	\end{equation}
	in which the coefficient matrices $D_{\rm out}$ and $D_{\rm in}$ are \emph{symbolically} different for CJ and DJ -- their entries involve $m_{11},m_{12},m_{21},m_{22}$ in different positions, because relabeling $v_e\leftrightarrow z_e$ as incoming or outgoing permutes which modal coefficient appears in each row. Substituting the values from \cref{eq:eigenvector} -- namely $m_{11}=-\sqrt{gH}=-m_{12}$ and $m_{21}=H=m_{22}$ -- collapses precisely those entries on which CJ and DJ disagree, so the two systems reduce to the same \emph{numerical} relation. Solving it gives
    \begin{equation}
    \mathbf{x}_{\rm out}(t) \;=\; S(B_i,B_j,B_k)\,\mathbf{x}_{\rm in}(t),
    \label{eq:scattering}
    \end{equation}
    with
    \begin{gather}
    S(B_i,B_j,B_k) \;=\; \frac{1}{B_i+B_j+B_k} \notag \\  
    \begin{bmatrix}
    B_i-B_j-B_k & 2 B_j & 2 B_k \\
    2 B_i & -B_i+B_j-B_k & 2 B_k \\
    2 B_i & 2 B_j & -B_i-B_j+B_k
    \end{bmatrix},
    \label{eq:Sgeneral}
    \end{gather}
    identical for both configurations. The CJ/DJ equivalence is therefore a feature of the specific eigenvector matrix $M$ produced by the shallow-water diagonalization, not of the Y-junction conditions in isolation.        
	
	\section{Proofs}\label{app:proofs}
	\subsection{Proof of \cref{prop:island_sync}}\label[appendix]{app:prop3}
	\begin{proposition*}[Two-path island synchronization (see \cref{2junctions})]
		\label{app_prop:island_sync}
		Suppose vertex-balance holds at both \(\nu_D\) and \(\nu_C\), so that
		\[
		B_0=B_\infty=B_T+B_B.
		\]
		For a nonzero non-periodic incoming pulse on \(e_0\), with amplitude \(a(t)\) at the vertex, the two-path
		island is transparent for all \(t\) if and only if
		\[
		L_T=L_B.
		\]
		If \(L_T\neq L_B\), the first-generation upstream reflection on \(e_0\) is
		\begin{gather*}
			\beta_0^{(1)}(t)
			=
			\frac{B_TB_B}{(B_T+B_B)^2}
			\times \notag \\ \left[
			2a(t-L_T-L_B)-a(t-2L_T)-a(t-2L_B)
			\right].
		\end{gather*}
	\end{proposition*}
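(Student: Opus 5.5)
Trace the pulse through the island with the vertex scattering law of \cref{prop:vertex_scattering}, in its weighted-mean form \cref{eq:weighted_mean_scattering}, applied separately at \(\nu_D\) and \(\nu_C\). Write \(A=B_T+B_B=B_0=B_\infty\).

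\textbf{Splitting at \(\nu_D\).} At \(\nu_D\) the total width is \(2A\), and only \(\alpha_0=a(t)\) is active. The weighted mean is therefore \(\overline{\alpha}_B=Aa/(2A)=a/2\). Hence the outgoing amplitudes are \(\beta_0=0\), which is the balance statement of \cref{prop:balance_rank} with \(\Sigma_D=\{e_0\}\), together with \(\beta_T=\beta_B=a(t)\). These waves travel along the arcs and reach \(\nu_C\) as \(\alpha_T(t)=a(t-L_T)\) and \(\alpha_B(t)=a(t-L_B)\).

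\textbf{Scattering at \(\nu_C\).} The total width at \(\nu_C\) is again \(2A\), and the incoming vector is \((\alpha_T,\alpha_B,0)\). The weighted mean is
\[
\overline{\alpha}_B=\frac{B_T\alpha_T+B_B\alpha_B}{2A}.
\]
The reflected arc amplitudes are therefore
\[
\beta_T=\frac{B_B(\alpha_B-\alpha_T)}{A},\qquad \beta_B=\frac{B_T(\alpha_T-\alpha_B)}{A}.
\]
These vanish identically exactly when \(\alpha_T\equiv\alpha_B\), which is the synchronized mode of \cref{rem:transparent_source}.

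\textbf{Return to \(\nu_D\).} The reflected waves travel back along the arcs, so \(\nu_D\) receives \(\alpha_T'(t)=\beta_T(t-L_T)\) and \(\alpha_B'(t)=\beta_B(t-L_B)\), with \(e_0\) quiescent for this group. The component sent upstream into \(e_0\) is \(2\overline{\alpha}_B=(B_T\alpha_T'+B_B\alpha_B')/A\). Expanding gives
\[
\frac{B_TB_B}{A^2}\bigl[a(t-L_B-L_T)-a(t-2L_T)+a(t-L_T-L_B)-a(t-2L_B)\bigr],
\]
which is exactly \cref{eq:beta0_two_path}. Part of this wave is also re-reflected into the arcs at \(\nu_D\); those parts belong to later generations.

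\textbf{The equivalence.} If \(L_T=L_B\), then \(\alpha_T\equiv\alpha_B\) at \(\nu_C\). So \(\beta_T=\beta_B=0\), all energy transmits to \(e_\infty\), no later generations are created, and the island is transparent for all \(t\). Conversely, suppose \(L_T\neq L_B\). The arc reflections from \(\nu_C\) are then nonzero, since \(a(t-L_T)-a(t-L_B)\) is not identically zero because \(a\) is non-periodic, and these waves must eventually produce upstream reflection.

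\textbf{Main obstacle.} The delicate step is ruling out cancellation between \(\beta_0^{(1)}\) and later generations \(\beta_0^{(n)}\). I would first show \(\beta_0^{(1)}\not\equiv0\) by Fourier transform. Its symbol is
\[
\frac{B_TB_B}{A^2}\bigl(2e^{-i\omega(L_T+L_B)}-e^{-2i\omega L_T}-e^{-2i\omega L_B}\bigr)=-\frac{B_TB_B}{A^2}\bigl(e^{-i\omega L_T}-e^{-i\omega L_B}\bigr)^2.
\]
This vanishes only on a discrete set of \(\omega\). Since \(\hat a\not\equiv0\) for a nonzero non-periodic pulse, \(\hat a\) is not supported on that discrete set, so \(\beta_0^{(1)}\not\equiv0\).

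To exclude later cancellation, compare earliest arrival times. Let \(L_B<L_T\) and suppose \(a\) has a first onset time \(t_0\), as for a localized pulse. Every term of \(\beta_0^{(1)}\) is delayed by at most \(2L_T\). Every later generation involves at least two additional arc traversals, so each of its terms is delayed by at least \(2L_T+2L_B\). By linearity, the first-generation term \(-a(t-2L_B)\) is the earliest upstream signal, arriving at \(t_0+2L_B\), and nothing can cancel it.

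Alternatively, one can avoid the onset assumption by summing the full reflection in frequency space as a geometric series and showing that the resulting reflection coefficient is not identically zero. I would present the onset-time argument, noting that it is where the hypotheses on \(a\) are used.
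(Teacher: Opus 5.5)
Your derivation of $\beta_0^{(1)}$ is the same as the paper's. You split at $\nu_D$ into $\beta_T=\beta_B=a$, scatter $(a(t-L_T),a(t-L_B),0)$ at $\nu_C$, delay the arc reflections back, and read off the $e_0$ component at $\nu_D$. Using the weighted-mean form instead of the explicit $3\times 3$ matrices is only a notational difference.

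The difference is in the converse. The paper equates transparency of the island with the absence of back-reflection at $\nu_C$, and stops once non-periodicity gives $a(t-L_T)\not\equiv a(t-L_B)$. It never checks that these arc reflections leave a nonzero trace on $e_0$. Your earliest-arrival argument supplies this missing step, and the step is not automatic. Write $\Delta=L_T-L_B$ and $s=t-L_T-L_B$. Then $\beta_0^{(1)}(t)=-\frac{B_TB_B}{A^2}\bigl[a(s+\Delta)-2a(s)+a(s-\Delta)\bigr]$. So for the non-periodic signal $a(t)=t$, the first generation vanishes identically, even though the reflection into $e_T$ at $\nu_C$ is the nonzero constant $\frac{B_B}{A}\Delta$. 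The same example shows that your Fourier claim (that $\hat a$ cannot live on the zero set of the symbol) is justified by localization, not by non-periodicity. You need $a\in L^1$ or $L^2$, so that $\hat a$ is a function. Resting your final argument on an onset time is the right choice.

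One step of that onset argument is false as written: later generations are not all delayed by at least $2L_T+2L_B$. The path that traverses $e_B$ four times has delay $4L_B$. Its coefficient is $-B_T^3B_B/A^4\neq 0$, since each $e_B\to e_B$ reflection, at either vertex, carries $-B_T/A$. Moreover $4L_B<2L_T+2L_B$ whenever $L_B<L_T$, and if $L_T>2L_B$ this term even arrives before the first-generation term at delay $2L_T$.

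What you actually need is weaker and true. A generation-$n$ term with $n\ge 2$ has delay at least $2nL_B\ge 4L_B>2L_B$. The other first-generation delays, $L_T+L_B$ and $2L_T$, exceed $2L_B$ by at least $L_T-L_B$. Hence on the interval $\bigl(t_0+2L_B,\,t_0+2L_B+\min\{L_T-L_B,2L_B\}\bigr)$ the entire upstream signal equals $-\frac{B_TB_B}{A^2}\,a(t-2L_B)$. This is not identically zero there, because $t_0$ is the onset of $a$.

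With this correction your proof is complete. It establishes the ``only if'' direction at the network level, which the paper's proof leaves implicit.
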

	\begin{proof}
		At \(\nu_D\), the active source set is \(\Sigma_D=\{e_0\}\).  The balance condition \(B_0=B_T+B_B\) gives no reflection on \(e_0\).  The full scattering law gives the outgoing characteristic amplitudes on the two arc edges,
		\[
		\beta_T(t)=\beta_B(t)=a(t).
		\]
		After propagation along the two arcs, the incoming source vector at
		\(\nu_C\) is
		\[
		\alpha_{\Sigma_C}(t)
		=
		\begin{pmatrix}
			a(t-L_T)\\
			a(t-L_B)
		\end{pmatrix},
		\qquad
		\Sigma_C=\{e_T,e_B\}.
		\]
		At \(\nu_C\), balance means \(B_\infty=B_T+B_B\).  By
		\cref{prop:balance_rank}, the source-source reflection block
		has kernel direction
		\[
		\mathbf 1_{\Sigma_C}=(1,1)^T.
		\]
		Therefore $\nu_C$ generates no back-reflection precisely when
		\[
		\alpha_{\Sigma_C}(t)\in \operatorname{span}\{\mathbf 1_{\Sigma_C}\}
		\qquad \text{for all }t,
		\]
		or equivalently,
		\[
		a(t-L_T)=a(t-L_B), \quad \forall t>0 \iff L_T = L_B.
		\]
		If \(L_T\neq L_B\), the two arc signals reflected from \(\nu_C\) propagate back to \(\nu_D\).  
		
		At the downstream vertex \(\nu_C\), use the edge order $(e_T,e_B,e_\infty)$. By \cref{prop:vertex_scattering}, the scattering law at \(\nu_C\) is
		\[
		\begin{pmatrix}
			\beta_T^C\\
			\beta_B^C\\
			\beta_\infty^C
		\end{pmatrix}
		=
		\begin{pmatrix}
			-\dfrac{B_B}{B_T+B_B} & \dfrac{B_B}{B_T+B_B} & 1\\[0.7em]
			\dfrac{B_T}{B_T+B_B} & -\dfrac{B_T}{B_T+B_B} & 1\\[0.7em]
			\dfrac{B_T}{B_T+B_B} & \dfrac{B_B}{B_T+B_B} & 0
		\end{pmatrix}
		\begin{pmatrix}
			\alpha_T^C\\
			\alpha_B^C\\
			\alpha_\infty^C
		\end{pmatrix}.
		\]
		During the first recombination event,
		\[
		\alpha_T^C(t)=a(t-L_T),
		\quad
		\alpha_B^C(t)=a(t-L_B),
		\quad
		\alpha_\infty^C(t)=0.
		\]
		Therefore the reflected amplitudes sent back into the two arcs are
		\[
		\beta_T^C(t)
		=
		\frac{B_B}{B_T+B_B}
		\left[
		a(t-L_B)-a(t-L_T)
		\right],
		\]
		and
		\[
		\beta_B^C(t)
		=
		\frac{B_T}{B_T+B_B}
		\left[
		a(t-L_T)-a(t-L_B)
		\right].
		\]
		
		At the upstream vertex \(\nu_D\), we use the ordering of
		edge types $(e_T,e_B,e_0)$.  By \cref{prop:vertex_scattering}, we have
		\[
		\begin{pmatrix}
			\beta_T^D\\
			\beta_B^D\\
			\beta_0^D
		\end{pmatrix}
		=
		\begin{pmatrix}
			-\dfrac{B_B}{B_T+B_B} & \dfrac{B_B}{B_T+B_B} & 1\\[0.7em]
			\dfrac{B_T}{B_T+B_B} & -\dfrac{B_T}{B_T+B_B} & 1\\[0.7em]
			\dfrac{B_T}{B_T+B_B} & \dfrac{B_B}{B_T+B_B} & 0
		\end{pmatrix}
		\begin{pmatrix}
			\alpha_T^D\\
			\alpha_B^D\\
			\alpha_0^D
		\end{pmatrix}.
		\]
		For the first-generation reflected group returning from \(\nu_C\), the upstream edge is quiescent, so
		\[
		\alpha_0^D(t)=0.
		\]
		The two incoming arc amplitudes at \(\nu_D\) are the reflected amplitudes from \(\nu_C\), delayed by their respective return times:
		\[
		\alpha_T^D(t)=\beta_T^C(t-L_T),
		\qquad
		\alpha_B^D(t)=\beta_B^C(t-L_B).
		\]
		Therefore the upstream reflected amplitude is the third component of the same scattering law:
		\begin{gather*}
			\beta_0^{(1)}(t)
			=
			\beta_0^D(t)
			\\ = 
			\frac{B_T}{B_T+B_B}\,\beta_T^C(t-L_T)
			+
			\frac{B_B}{B_T+B_B}\,\beta_B^C(t-L_B).
		\end{gather*}
		Substituting the expressions for \(\beta_T^C\) and \(\beta_B^C\) gives
		\[
		\begin{aligned}
			\beta_0^{(1)}(t)
			&=
			\frac{B_TB_B}{(B_T+B_B)^2}
			\left[
			a(t-L_T-L_B)-a(t-2L_T)
			\right] \\
			&\quad+
			\frac{B_TB_B}{(B_T+B_B)^2}
			\left[
			a(t-L_T-L_B)-a(t-2L_B)
			\right],
		\end{aligned}
		\]
		hence
		\begin{gather*}
			\beta_0^{(1)}(t)
			=
			\frac{B_TB_B}{(B_T+B_B)^2} \times \\
			\left[
			2a(t-L_T-L_B)-a(t-2L_T)-a(t-2L_B)
			\right].
		\end{gather*}
	\end{proof}
	
	\subsection{Proof of \cref{prop:N_path_reflection}}\label[appendix]{app:prop_4}
	\begin{proposition*}[First-generation upstream reflection for an \(N\)-path island]
		Under the balance condition \cref{eq:N_balance}, the first-generation upstream reflection on \(e_0\) is
		\begin{equation}
        \begin{aligned}
        \beta_0^{(1)}(t)
        &=
        \frac{1}{A^2}
        \sum_{k=1}^{N-1}\sum_{j=k+1}^{N}
        \Biggl\{
        B_kB_j \times {}\\
        &\qquad\left[
        2a(t-L_k-L_j)-a(t-2L_k)-a(t-2L_j)
        \right]
        \Biggr\}.
        \end{aligned}
        \end{equation}
        where $A = \sum^N_{k=1} B_k$ is the total width of all internal edges.
		Equivalently, if
		\[
		\pi_k=\frac{B_k}{A},
		\qquad
		P(\omega)=\sum_{k=1}^N \pi_k e^{-i\omega L_k},
		\]
		then in frequency domain,
		\begin{equation}
			\widehat{\beta}_0^{(1)}(\omega)
			=
			\widehat a(\omega)
			\left[
			P(\omega)^2-P(2\omega)
			\right].
			\label{eq:app_beta0_N_freq}
		\end{equation}
	\end{proposition*}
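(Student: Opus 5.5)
I would follow the proof of \cref{prop:island_sync} step by step, now at vertices of degree \(N+1\), and then pass to frequency space.

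At \(\nu_D\), with edge order \((e_1,\dots,e_N,e_0)\) and \(\Sigma_D=\{e_0\}\), \cref{prop:vertex_scattering} together with \(B_0=A\) gives total weight \(2A\). Hence \(S_{ab}=B_b/A-\delta_{ab}\). With incoming data \(\alpha_0=a(t)\), the outputs are \(\beta_k=a(t)\) on every arc and \(\beta_0=0\).

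After a delay \(L_k\), the incoming vector at \(\nu_C\) is \(\alpha_k^C(t)=a(t-L_k)\), with \(\alpha_\infty^C=0\). Because \(B_\infty=A\), \cref{eq:weighted_mean_scattering} gives
\[
\beta_k^C(t)=\sum_j \pi_j a(t-L_j)-a(t-L_k).
\]
This is the component of \(\alpha_{\Sigma_C}\) orthogonal to \(\mathbf 1\) in the sense of \cref{rem:transparent_source}, reflected with a sign flip. These signals return along the arcs, so the incoming data at \(\nu_D\) are \(\alpha_k^D(t)=\beta_k^C(t-L_k)\), with \(\alpha_0^D=0\). The row of \(S\) for \(e_0\) then gives
\[
\beta_0^{(1)}(t)=\sum_k \pi_k\,\beta_k^C(t-L_k)=\sum_{k,j}\pi_k\pi_j\, a(t-L_k-L_j)-\sum_k \pi_k\, a(t-2L_k).
\]

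To obtain the pairwise time-domain form, I would split the double sum into its diagonal part \(\sum_k\pi_k^2 a(t-2L_k)\) and twice the sum over \(k<j\). I would then use \(\pi_k-\pi_k^2=\pi_k\sum_{j\ne k}\pi_j\) to redistribute the single-index terms over pairs. Each pair \(\{k,j\}\) then collects \(-\pi_k\pi_j[a(t-2L_k)+a(t-2L_j)]\), which after multiplying by \(A^2\) yields \cref{eq:beta0_N_time}. As a consistency check, \(N=2\) reproduces \cref{eq:beta0_two_path}.

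The frequency statement follows by applying the Fourier transform to the unreduced double-sum expression. Using \(\widehat{a(\cdot-\tau)}=e^{-i\omega\tau}\widehat a\), the double sum becomes \(\widehat a\,P(\omega)^2\) and the single sum becomes \(\widehat a\,P(2\omega)\). Conversely, both identities are equivalent by Fourier inversion, at least for \(a\in L^1\) or in the sense of tempered distributions.

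The main obstacle is not the algebra but justifying the notion of "first generation": one must isolate the signals that have scattered exactly once at \(\nu_C\) and then once back at \(\nu_D\). I would handle this by linearity, expanding the full response as a sum over scattering paths and keeping only the paths of the form \(e_0\to e_k\to\nu_C\to e_j\to e_0\). In these paths the return edge is the same arc \(k\) that carried the signal, so no cross terms from further internal bounces enter.
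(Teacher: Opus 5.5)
Your proposal is correct and follows the paper's proof step by step: the equal split at \(\nu_D\), the weighted-mean reflection at \(\nu_C\), the return delay \(L_k\), and the diagonal/off-diagonal regrouping. The one difference is that you Fourier-transform the unreduced double sum directly to read off \(P(\omega)^2-P(2\omega)\), which is shorter than the paper's route through the pairwise form and the independent-copy identity \(\mathbb{E}[(f(X)-f(Y))^2]=2(P(2\omega)-P(\omega)^2)\).

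There is a wording slip in your last paragraph. In the path \(e_0\to e_k\to\nu_C\to e_j\to e_0\), the return arc \(e_j\) generally differs from \(e_k\), and that is exactly where the mixed delays \(L_k+L_j\) come from. What is true is that a signal emitted into a given arc at \(\nu_C\) returns along that same arc, which your formula \(\alpha_k^D(t)=\beta_k^C(t-L_k)\) already encodes correctly.
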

	\begin{proof}
		At \(\nu_D\), balance gives no reflection on \(e_0\) and sends the same
		outgoing characteristic amplitude into each arc:
		\[
		\beta_k^D(t)=a(t),
		\qquad
		k=1,\dots,N.
		\]
		After propagation, the incoming data at \(\nu_C\) are
		\[
		\alpha_k^C(t)=a(t-L_k),
		\qquad
		\alpha_\infty^C(t)=0.
		\]
		Using the weighted-mean form \cref{eq:weighted_mean_scattering}, the mean at
		\(\nu_C\) is
		\[
		\overline{\alpha}^C(t)
		=
		\frac{1}{2A}
		\sum_{j=1}^N B_j a(t-L_j).
		\]
		Therefore the back-reflection sent from \(\nu_C\) into arc \(e_k\) is
		\begin{align}
		\beta_k^C(t)
		&=
		2\overline{\alpha}^C(t)-\alpha_k^C(t) \notag \\
		&=
		\left[\frac{1}{A}\sum_{j=1}^N B_j a(t-L_j)\right]-a(t-L_k). \label{eq:beta_C_k}
		\end{align}
		This reflected signal returns to \(\nu_D\) after another delay \(L_k\), so
		\[
		\alpha_k^{D,2}(t)=\beta_k^C(t-L_k).
		\]
		With \(e_0\) quiescent during this return, the outgoing reflected component
		on \(e_0\) is
		\begin{align}
		\beta_0^{(1)}(t) &= 2\overline{\alpha}^D(t) - \cancelto{0}{\alpha_0^D(t)} \notag\\
        &= 2 \left[\frac{1}{2A} \sum_{k=1}^N B_k\beta_k^C(t-L_k)\right] \notag\\
		&= \frac{1}{A} \sum_{k=1}^N B_k\beta_k^C(t-L_k). \label{eq:beta_1_0_total}
		\end{align}
		Substituting \cref{eq:beta_C_k} into \cref{eq:beta_1_0_total} gives
		\[
        \begin{aligned}
		\beta_0^{(1)}(t) 
		&=
		\frac{1}{A^{2}}\sum_{k=1}^{N}B_{k}\sum_{j=1}^{N}B_{j}a\left(t-L_{k}-L_{j}\right) \\ 
		&-
		\frac{1}{A}
		\sum_{k=1}^N
		B_k a(t-2L_k),
        \end{aligned}
		\]		
        with now the goal of combining the two sums. 
        
		We split the double sum into off-diagonal and diagonal parts:
		\begin{align}
			\beta_0^{(1)}(t)
			&=
			\frac{1}{A^2}
			\sum_{k=1}^{N}B_{k}\sum_{j\neq k}^{N}B_{j}a\left(t-L_{k}-L_{j}\right)
			\notag \\
			&\quad+
			\frac{1}{A^2}
			\sum_{k=1}^N
			B_k^2 a(t-2L_k)
			-
			\frac{1}{A}
			\sum_{k=1}^N
			B_k a(t-2L_k). \label{eq:separate_beta}
		\end{align}
        Looking at the diagonal terms in the second line, we use the fact that \(A=\sum_{j=1}^N B_j\) and deduce that the coefficient becomes
		\begin{gather}
        \frac{B_{k}^{2}}{A^{2}}-\frac{B_{k}}{A}=\frac{B_{k}}{A^{2}}\left(B_{k}-A\right)\notag \\
        =\frac{B_{k}}{A^{2}}\left(B_{k}-\sum_{j=1}^{N}B_{j}\right)=-\frac{B_{k}}{A^{2}}\sum_{j\neq k}^{N}B_{j}. \label{eq:derived_coefficient}
        \end{gather}
		Substituting \cref{eq:derived_coefficient} into \cref{eq:separate_beta} yields
		\begin{align*}
        \beta_{0}^{(1)}(t) & =\frac{1}{A^{2}}\sum_{k=1}^{N}B_{k}\sum_{j\neq k}^{N}B_{j}\left[a\left(t-L_{k}-L_{j}\right)-a\left(t-2L_{k}\right)\right]\\
         &=
        \frac{1}{A^2}
        \sum_{k=1}^{N-1}\sum_{j=k+1}^{N}
        \Biggl\{
        B_kB_j \times {}\\
        &\qquad\left[
        2a(t-L_k-L_j)-a(t-2L_k)-a(t-2L_j)
        \right]
        \Biggr\}.
        \end{align*}
        Here the last equality is obtained by splitting the ordered sum into the regions \(j<k\) and \(j>k\), then relabeling the \(j<k\) contribution so that both parts are indexed by \(1\le k<j\le N\).
		
		In frequency domain, we take the Fourier transform of \cref{eq:beta0_N_time} in time, with time shifts mapped to multiplication by \(e^{-i\omega L}\), which gives
		\begin{align}
		\widehat{\beta}_0^{(1)}(\omega)
		&=
		-\widehat a(\omega)
		\sum_{k=1}^{N-1}\sum_{j=k+1}^{N}
		\pi_k\pi_j
		\left(
		e^{-i\omega L_k}-e^{-i\omega L_j}
		\right)^2 \notag \\ 
        &=
		-\frac{\widehat a(\omega)}{2}
		\sum_{k=1}^{N}\sum_{j=1}^{N}
		\pi_k\pi_j
		\left(
		e^{-i\omega L_k}-e^{-i\omega L_j}
		\right)^2 .
        \label{eq:beta_hat_pix}
		\end{align}
        We collapse this double sum with a probabilistic re-expression; this is a calculation tool, and the interpretation it makes possible is recorded as a 
        brief remark after the proof. 
        
        Let \(X\) be a random variable distributed according to the width-weighted path-delay measure
        \[
        \mu \;=\; \sum_{k=1}^{N} \pi_k\,\delta_{L_k},
        \qquad
        \pi_k=\frac{B_k}{A},
        \]
        and let \(Y\) be an independent copy of \(X\), i.e. $X\overset{d}{\sim}Y$. Writing \(f(x):=e^{-i\omega x}\), the double sum in \cref{eq:beta_hat_pix} becomes
        \begin{gather*}
        \sum_{k=1}^{N}\sum_{j=1}^{N}
		\pi_k\pi_j
		\left(f(L_k)-f(L_j)
		\right)^2
        = \mathbb{E}_{\mu}\!\left[\bigl(f(X)-f(Y)\bigr)^2\right] \\
        = 2\!\left(\mathbb{E}_{\mu}[f(X)^2] - \mathbb{E}_{\mu}[f(X)]\,\mathbb{E}_{\mu}[f(Y)]\right) \\
        = 2\!\left(P(2\omega)-P(\omega)^2\right),
        \end{gather*}
        where the last line uses \(\mathbb{E}[f(X)]=P(\omega)\), \(\mathbb{E}[f(X)^2]=\mathbb{E}[e^{-i 2\omega X}]=P(2\omega)\), and the independence of \(X\) and \(Y\) to factor \(\mathbb{E}[f(X)\,f(Y)]=P(\omega)^2\).  Substituting into \cref{eq:beta_hat_pix} yields \cref{eq:app_beta0_N_freq}.
	\end{proof}

    \paragraph*{Probabilistic interpretation of \(\mathcal{H}(\omega)\).}
    The substitution above identifies \(P(\omega)=\mathbb{E}_{\mu}[e^{-i\omega X}]\) as the characteristic function of the path-delay measure \(\mu\).  In these terms \(P(\omega)^2\) is the characteristic function of \(X+Y\) -- two independently selected path delays -- and \(P(2\omega)\) is the characteristic function of \(2X\) -- twice the same path delay.  These are exactly the mixed-path and same-path return-delay contributions in \cref{eq:beta0_N_time}, so \(\mathcal{H}(\omega)\) measures, in the Fourier domain, the mismatch between mixed-path and same-path first-generation returns.

    \subsection{Proof of \cref{cor:codim_lengths}}\label[appendix]{app:cor1}
    \begin{corollary*}[Cancellation mechanisms for an $N$-path Island]
        Let 
        \[
        \mathcal{H}(\omega):=P(\omega)^2-P(2\omega) ,   
        \]
        where $P(\omega)$ is defined in \cref{eq:frequency_P}.
		Then the following statements hold.
        \begin{enumerate}
            \item $\mathcal H(\omega)=0 \ \text{for all } \omega\in\mathbb R$ $\Longleftrightarrow$ $L_1=L_2=\cdots=L_N$.
            \item For unequal lengths $\{L_k\}$, suppose $L_k - L_1 \in d\mathbb{Z}$ for some $d>0$, for $k=1,\dots,N$. Then for every nonzero integer $m$, 
                \[
                \omega_m=\frac{2\pi m}{d}
                \Longrightarrow
                \mathcal H(\omega_m)=0,
                \]
                independent of $\pi_k$.
        \end{enumerate}
	\end{corollary*}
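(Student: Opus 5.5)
The plan is to derive the pair decomposition \cref{eq:H_pair_decomp} first and then read both parts off it. From the proof of \cref{prop:N_path_reflection} we already have
\[
\mathcal H(\omega)=-\sum_{1\le j<k\le N}\pi_j\pi_k\bigl(e^{-i\omega L_j}-e^{-i\omega L_k}\bigr)^2 .
\]
Factoring out the midpoint phase gives
\[
e^{-i\omega L_j}-e^{-i\omega L_k}=e^{-i\omega(L_j+L_k)/2}\bigl(-2i\sin(\omega(L_j-L_k)/2)\bigr).
\]
Squaring produces $-4\sin^2(\omega(L_j-L_k)/2)\,e^{-i\omega(L_j+L_k)}$, and this yields \cref{eq:H_pair_decomp}.

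\textbf{Part 2 (commensurability).} Fix $m\neq 0$ and set $\omega_m=2\pi m/d$. Under the commensurability hypothesis, every difference $L_j-L_k=(L_j-L_1)-(L_k-L_1)$ lies in $d\mathbb Z$. Hence $\omega_m(L_j-L_k)/2\in\pi\mathbb Z$, so every sine in the pair decomposition vanishes and $\mathcal H(\omega_m)=0$. No property of the $\pi_k$ is used. A direct alternative also works: $P(\omega_m)=e^{-i\omega_m L_1}$, since all phases coincide, and $P(2\omega_m)=e^{-2i\omega_m L_1}$, so the two terms cancel.

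\textbf{Part 1, direction ($\Leftarrow$).} If all $L_k$ are equal, every sine vanishes identically. Equivalently, $P(\omega)=e^{-i\omega L}$, and $P(\omega)^2=P(2\omega)$.

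\textbf{Part 1, direction ($\Rightarrow$).} This is the main obstacle, since the pairwise terms carry different phases and might cancel. I would avoid a term-by-term argument and use the probabilistic form instead. Let $\mu$ be the path-delay measure from the proof of \cref{prop:N_path_reflection}. Then $P(\omega)^2$ is the Fourier transform of $\mu*\mu$, the law of $X+Y$, and $P(2\omega)$ is the Fourier transform of the law of $2X$. If $\mathcal H\equiv 0$, uniqueness of Fourier transforms of finite discrete measures gives $\mu*\mu=(2\cdot)_\#\mu$. Concretely, the exponentials $e^{-i\omega s}$ for distinct $s$ are linearly independent as functions of $\omega$, so the coefficients of the two finite exponential sums must match.

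Now compare supports. Let $L_{\min}=\min_k L_k$ and $L_{\max}=\max_k L_k$. The law of $2X$ is supported on $\{2L_k\}$. The law of $X+Y$ gives mass at least $2\pi_j\pi_k>0$ to $L_j+L_k$ for $L_j\ne L_k$, and in particular to $L_{\min}+L_{\max}$. If the lengths are not all equal, then $L_{\min}+L_{\max}$ lies strictly between $2L_{\min}$ and $2L_{\max}$. For it to be in the support of the law of $2X$, it must equal $2L_k$ for some intermediate $L_k$.

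To close the argument without case analysis, I would instead compare variances. Equality of the two laws forces $\mathrm{Var}(X+Y)=\mathrm{Var}(2X)$, that is, $2\,\mathrm{Var}(X)=4\,\mathrm{Var}(X)$. Hence $\mathrm{Var}(X)=0$, so $\mu$ is a point mass and all $L_k$ coincide because every $\pi_k>0$. One could equally match second derivatives of $\mathcal H$ at $\omega=0$, which gives $\mathcal H''(0)=2\,\mathrm{Var}(X)$ up to sign and avoids measure language altogether.
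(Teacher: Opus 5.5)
Your proposal is correct and follows essentially the same route as the paper: the half-angle pair decomposition kills every $\sin^2$ term at $\omega_m=2\pi m/d$ for part 2, and for the forward direction of part 1 you use $X+Y\stackrel{d}{=}2X$ followed by $2\operatorname{Var}(X)=4\operatorname{Var}(X)$. Your side remark that $\mathcal H''(0)=2\operatorname{Var}(X)$, read off from the low-frequency expansion \cref{eq:variance}, is a valid shortcut that reaches the same conclusion without appealing to uniqueness of characteristic functions.
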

    \begin{proof}       
        We retain the notation of the proof of \cref{prop:N_path_reflection}: \(X\) is the path-delay random variable with law \(\mu=\sum_{k=1}^N \pi_k\,\delta_{L_k}\), and \(Y\) is an independent copy of \(X\). 

        \textbf{Part 1}:
        
        (Direction $\Leftarrow$) If \(L_1=\cdots=L_N=L\), then \(X=L\), \(P(\omega)=e^{-i\omega L}\), and \(P(\omega)^2=e^{-2i\omega L}=P(2\omega)\) for every \(\omega\in\mathbb R\); hence \(\mathcal H\equiv 0\).
        
        Conversely (direction $\Rightarrow$), suppose \(\mathcal H(\omega)=0\) for every \(\omega\in\mathbb R\). Then \(P(\omega)^2=P(2\omega)\) on all of \(\mathbb R\), so the characteristic functions of \(X+Y\) and \(2X\) coincide everywhere. By the uniqueness theorem for characteristic functions, \(X+Y\stackrel{d}{=}2X\). Equating variances and using independence of \(X\) and \(Y\),
        \[
        2\,\operatorname{Var}(X)
        \;=\;
        \operatorname{Var}(X+Y)
        \;=\;
        \operatorname{Var}(2X)
        \;=\;
        4\,\operatorname{Var}(X),
        \]
        so \(\operatorname{Var}(X)=0\), and \(X\) is a constant. Hence \(\mu\) is a single point mass, and \(L_1=L_2=\cdots=L_N\).

        \textbf{Part 2}:
        
        Applying the half-angle identity
        \[
        e^{-i\omega L_j} - e^{-i\omega L_k} \;=\; -2i\,\sin\!\Bigl(\tfrac{\omega(L_j-L_k)}{2}\Bigr)\,e^{-i\omega(L_j+L_k)/2}
        \]
        to the pairwise expansion, we have
        \begin{align}
        \mathcal H(\omega) &= -\sum_{j<k}\pi_j\pi_k\bigl(e^{-i\omega L_j} - e^{-i\omega L_k}\bigr)^2 \notag \\
        &= 4\sum_{1\le j<k\le N}\pi_j\pi_k\,\sin^2\!\Bigl(\tfrac{\omega(L_j-L_k)}{2}\Bigr)\,e^{-i\omega(L_j+L_k)}. \label{eq:sine_squared}
        \end{align}
        By the hypothesis $L_k - L_1 \in d\mathbb Z$ for every $k$, every pairwise difference satisfies $L_j - L_k = (L_j-L_1) - (L_k-L_1) \in d\mathbb Z$; write $L_j - L_k = n_{jk}\,d$ with $n_{jk}\in\mathbb Z$. At $\omega_m = 2\pi m/d$,
        \[
        \frac{\omega_m(L_j-L_k)}{2} \;=\; \pi\,m\,n_{jk} \;\in\; \pi\mathbb Z,
        \]
        so $\sin\!\bigl(\omega_m(L_j-L_k)/2\bigr) = 0$ for every pair $\{j,k\}$. Every term in \cref{eq:sine_squared} vanishes, hence $\mathcal H(\omega_m) = 0$, independent of $\{\pi_k\}$.
    \end{proof}
    
    We comment on the behavior of $\mathcal{H}(\omega)$ near $\omega=0$ by Taylor expanding
    \[
    P(\omega) \;=\; \mathbb{E}[e^{-i\omega X}] \;=\; 1 \;-\; i\omega\,\mathbb{E}[X] \;-\; \tfrac{\omega^2}{2}\,\mathbb{E}[X^2] \;+\; O(\omega^3),
    \]
    squaring through order \(\omega^2\) gives
    \[
    P(\omega)^2
    \;=\;
    1 
    \;-\; 2i\omega\,\mathbb{E}[X]
    \;-\; \omega^2\,\bigl(\mathbb{E}[X^2] + \mathbb{E}[X]^2\bigr)
    \;+\; O(\omega^3),
    \]
    while substituting \(\omega\mapsto 2\omega\) in the same expansion gives
    \[
    P(2\omega) \;=\; 1 \;-\; 2i\omega\,\mathbb{E}[X] \;-\; 2\omega^2\,\mathbb{E}[X^2] \;+\; O(\omega^3).
    \]
    The constant and linear-in-\(\omega\) terms encode only the first moment \(\mathbb{E}[X]\) and agree between the two expressions; they cancel in the difference. What survives is the mismatch between the second-moment coefficients: \(2\mathbb{E}[X^2] - \bigl(\mathbb{E}[X^2]+\mathbb{E}[X]^2\bigr) = \mathbb{E}[X^2]-\mathbb{E}[X]^2\). Hence
    \begin{align}
    \mathcal{H}(\omega) 
    \;=\; 
    P(\omega)^2 - P(2\omega)
    &\;=\; 
    \omega^2\,\bigl(\mathbb{E}[X^2]-\mathbb{E}[X]^2\bigr) + O(\omega^3) \notag \\
    &\;=\;
    \omega^2\,\operatorname{Var}(X) + O(\omega^3), \label{eq:variance}
    \end{align}
    identifying the variance of the path-delay random variable -- equivalently, the \(\pi\)-weighted variance \(\sum_k\pi_k L_k^2-(\sum_k\pi_k L_k)^2\) of the path-length tuple -- as the leading low-frequency obstruction to first-generation transparency. 
	
	\section{Numerical Method and Results}\label{sec:numerics}
	
	\subsection{Lagrangian scheme}
	This appendix records the characteristic propagation scheme used to generate numerical visualizations. The analytical results in the main text rely on the exact characteristic scattering laws; the numerical scheme simply implements those laws on a grid.
	
	The simulations use the characteristic variables introduced in \cref{uni-equ}.  Since the linearized system diagonalizes into two constant-speed transport equations, the numerical update is performed mode-by-mode along characteristics and the junction update is imposed through the scattering matrix per \cref{eq:Sgeneraldegree}.  This avoids solving a global linear system on the graph at each time step; implicit finite-difference and Crank--Nicolson approaches for related network problems may be found in, for example, Jacovkis~\cite{Jacovkis90} and the quantum-graph computations of Goodman et al.~\cite{QGLAB} and Yusupov et al.~\cite{Yusupov1,Yusupov2,Yusupov2025}.
	
	Along the characteristics,
	\begin{align}
		\frac{dv}{dt} (x_1(t),t) &= 0,~ \mbox{with} ~ \frac{dx_1}{dt} = \lambda_1, \\
		\frac{dz}{dt} (x_2(t),t) &= 0, ~ \mbox{with} ~ \frac{dx_2}{dt} = \lambda_2.
		\label{Lagrange}
	\end{align}
	For the simulations shown here we take \(U=0\), so
	\[
	\lambda_1=-C_0,\qquad \lambda_2=C_0,
	\]
	and choose the mesh ratio \(\Delta x/\Delta t=C_0\).  In this case the
	characteristic update is exact on the grid:
	\begin{equation}
		v_j^{n+1}=v_{j+1}^n,\qquad z_j^{n+1}=z_{j-1}^n,
		\label{eq:lagrangian}
	\end{equation}
	up to the junction scattering operations (see \cref{Grid}).  For \(U>0\) in the subcritical regime, \(|\lambda_1|\neq \lambda_2\) in general; an interpolation-based semi-Lagrangian update would then be the natural extension, but is not used in the simulations reported here.

    The numerical figures in \cref{sec:numerics} were generated with custom MATLAB codes implementing the characteristic update and vertex scattering rules described above; the codes are available from the corresponding author upon reasonable request.
	\begin{figure}[h]
		\centering
		\includegraphics[height=1.5in,width=2.2in]{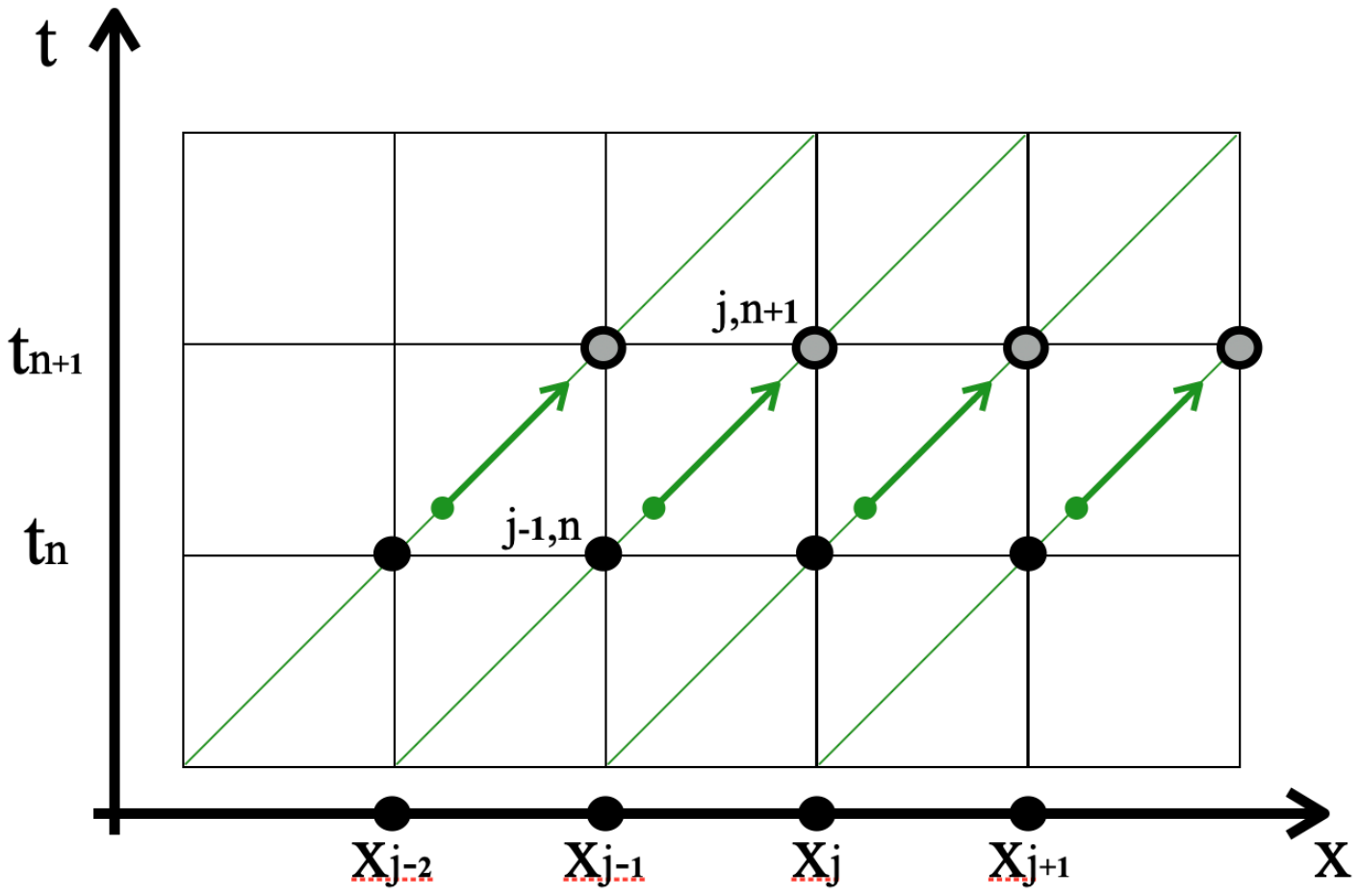}
		\caption{The Lagrangian numerical scheme for the downstream mode: $z_{j}^{n+1}:=z_{j-1}^{n}$.}
		\label{Grid}
	\end{figure}
	
	\subsection{Two-path synchronization}
	We now visualize the synchronization mechanism in \cref{prop:island_sync}. The simulations are not intended as a separate numerical study; they show how the same locally balanced junctions produce either transparency or reflection depending only on the relative arrival times at \(\nu_C\). Consistent with the analytical assumptions in the main text, we take \(U=0\), so the grid update in \cref{eq:lagrangian} is exact between junction scattering events.
	
	In all cases simulated here, we consider an upstream Gaussian disturbance 
	\begin{equation}
		g(x)=\exp\left(-\frac{\left(x-0.2\right)^{2}}{0.005}\right)
		\label{IC}
	\end{equation}
	incoming from the left edge $e_0$ (the Gaussian center and width parameter are chosen so that the pulse fits $e_0$). It interacts first with the junction $\nu_D$, and after transmitting into the arcs (without reflection), it interacts with $\nu_C$. More explicitly, using the scattering matrix at $\nu_D$
	\begin{equation}
		\label{IO_BCJb}
		\begin{bmatrix}
			v_i(t)\\
			z_j(t) \\
			z_k(t)\\
		\end{bmatrix}
		=
		\begin{bmatrix}
			0 &\frac{1}{2} & \frac{1}{2}   \\
			1& -\frac{1}{2} & \frac{1}{2} \\
			1 & \frac{1}{2} & -\frac{1}{2}  \\
		\end{bmatrix}\\ 
		\begin{bmatrix}
			a_0(t)\\
			0 \\
			0\\
		\end{bmatrix}
		= 
		\begin{bmatrix}
			0\\
			a_0(t)\\
			a_0(t)\\
		\end{bmatrix}.
	\end{equation}
	Thus the pulse splits into two identical disturbances, one on each reach of the island. No reflection is observed. This first stage is identical for the symmetric and asymmetric islands; see \cref{IS_t0p8} and \cref{IAS_t0p8}.
	
	\begin{figure*}[!t]
		\centering		
		\begin{minipage}[t]{0.48\textwidth}
			\centering
			\textbf{Symmetric island}\\[0.3em]
			
			\subfloat[$t=0$]{\label{IS_t0}
				\includegraphics[height=1.7in,keepaspectratio]{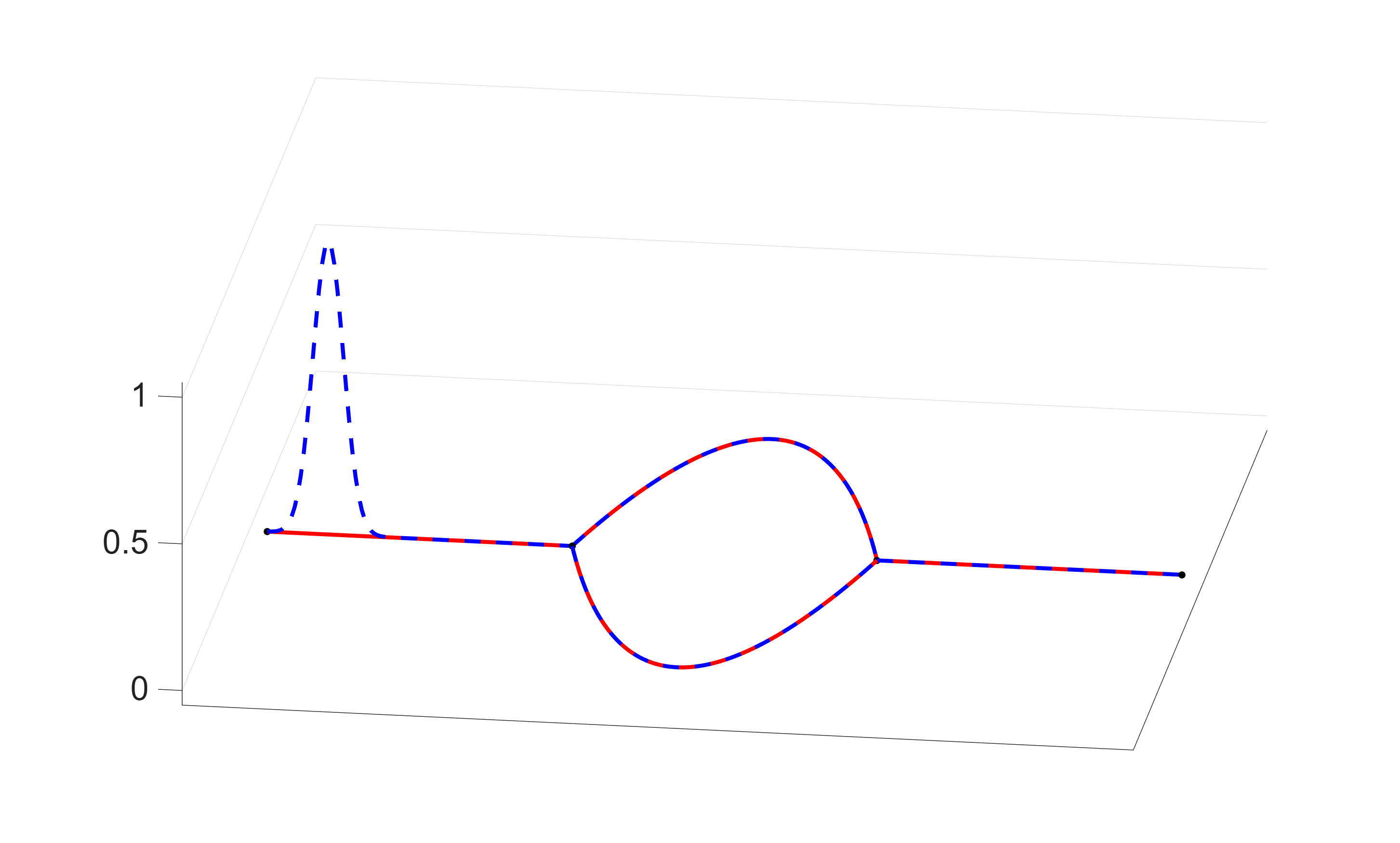}}\\
			
			\subfloat[$t=0.8$]{\label{IS_t0p8}
				\includegraphics[height=1.7in,keepaspectratio]{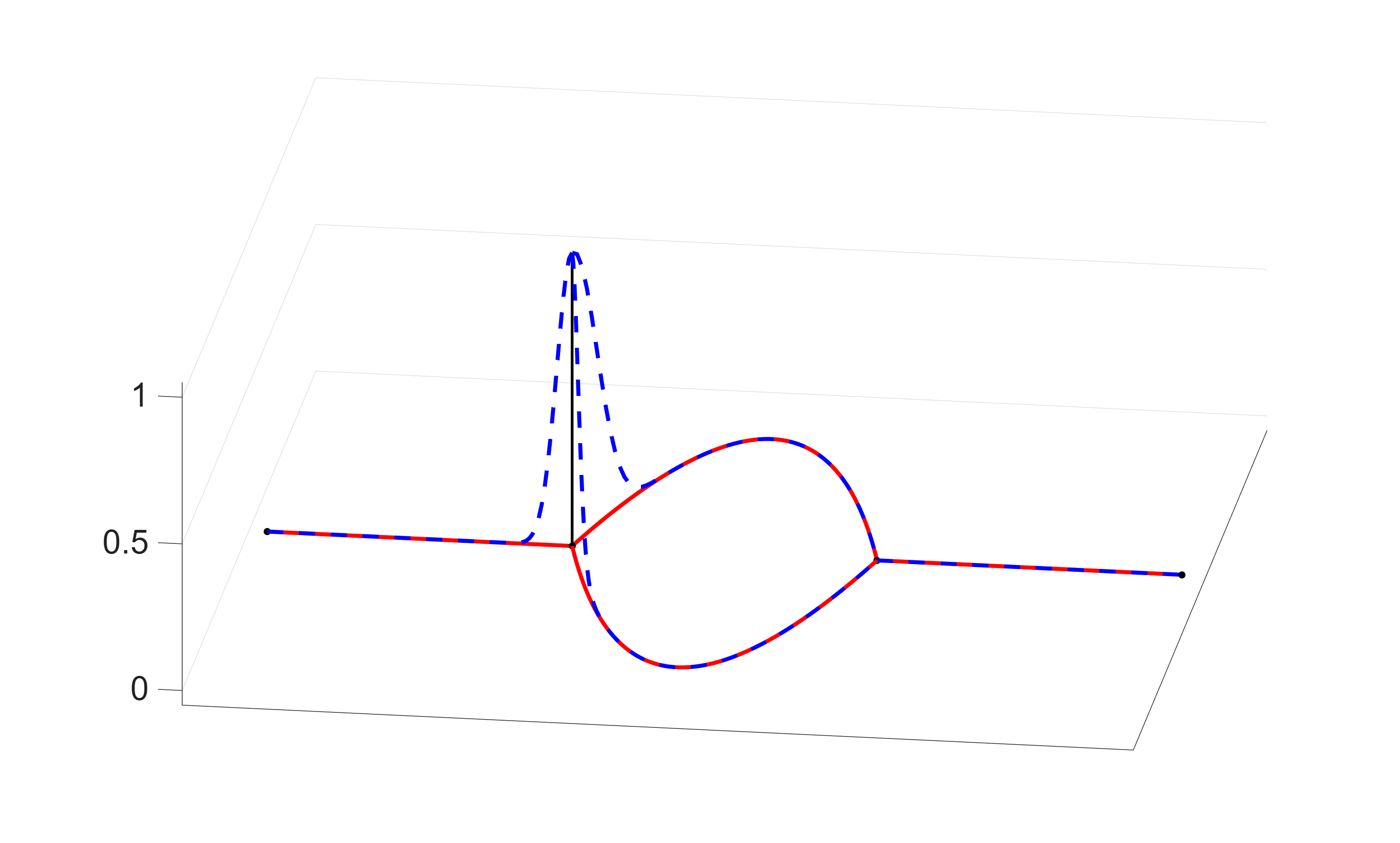}}\\
			
			\subfloat[$t=1.8$]{\label{IS_t1p8}
				\includegraphics[height=1.7in,keepaspectratio]{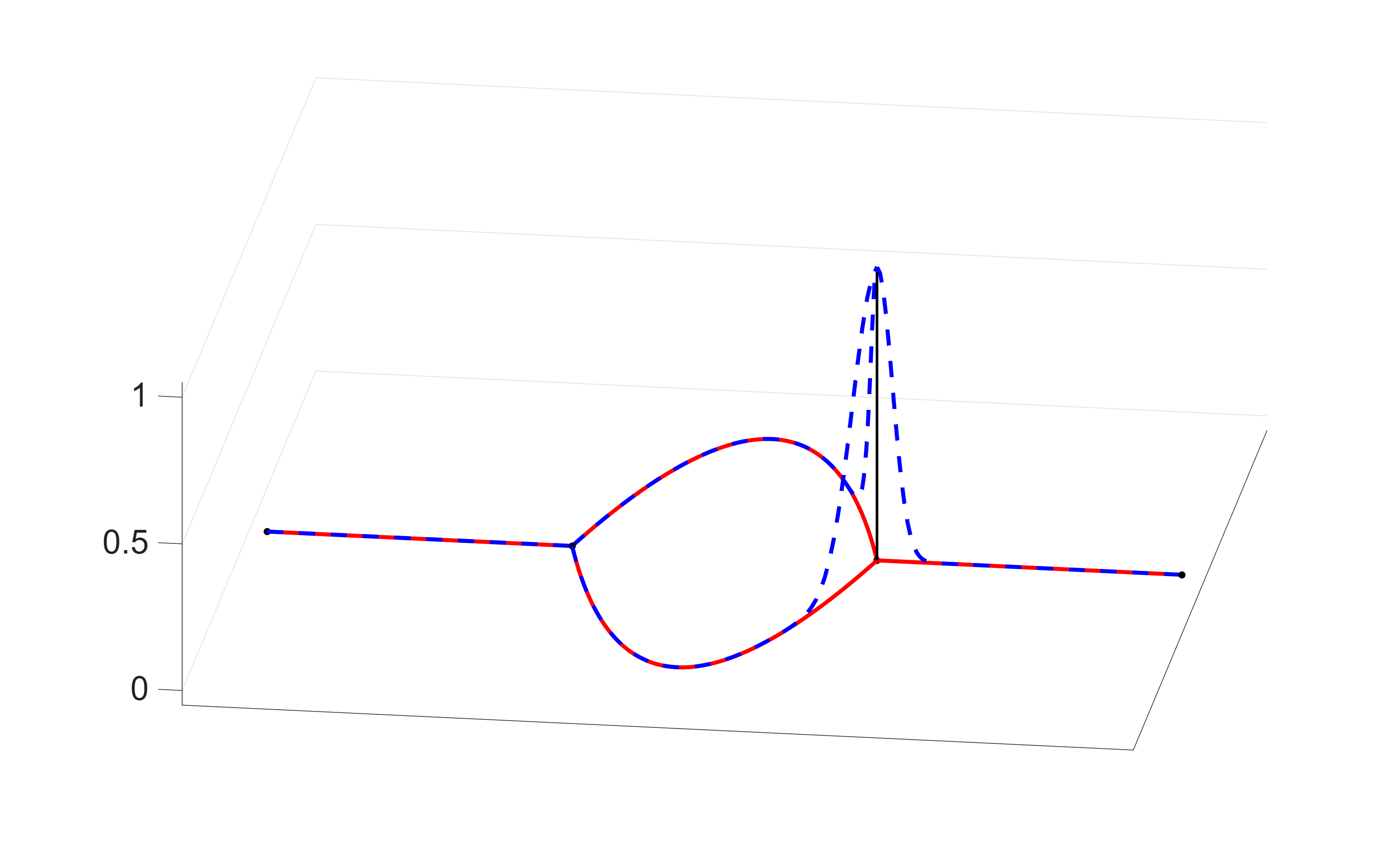}}\\
			
			\subfloat[$t=2.1$]{\label{IS_t2p1}
				\includegraphics[height=1.7in,keepaspectratio]{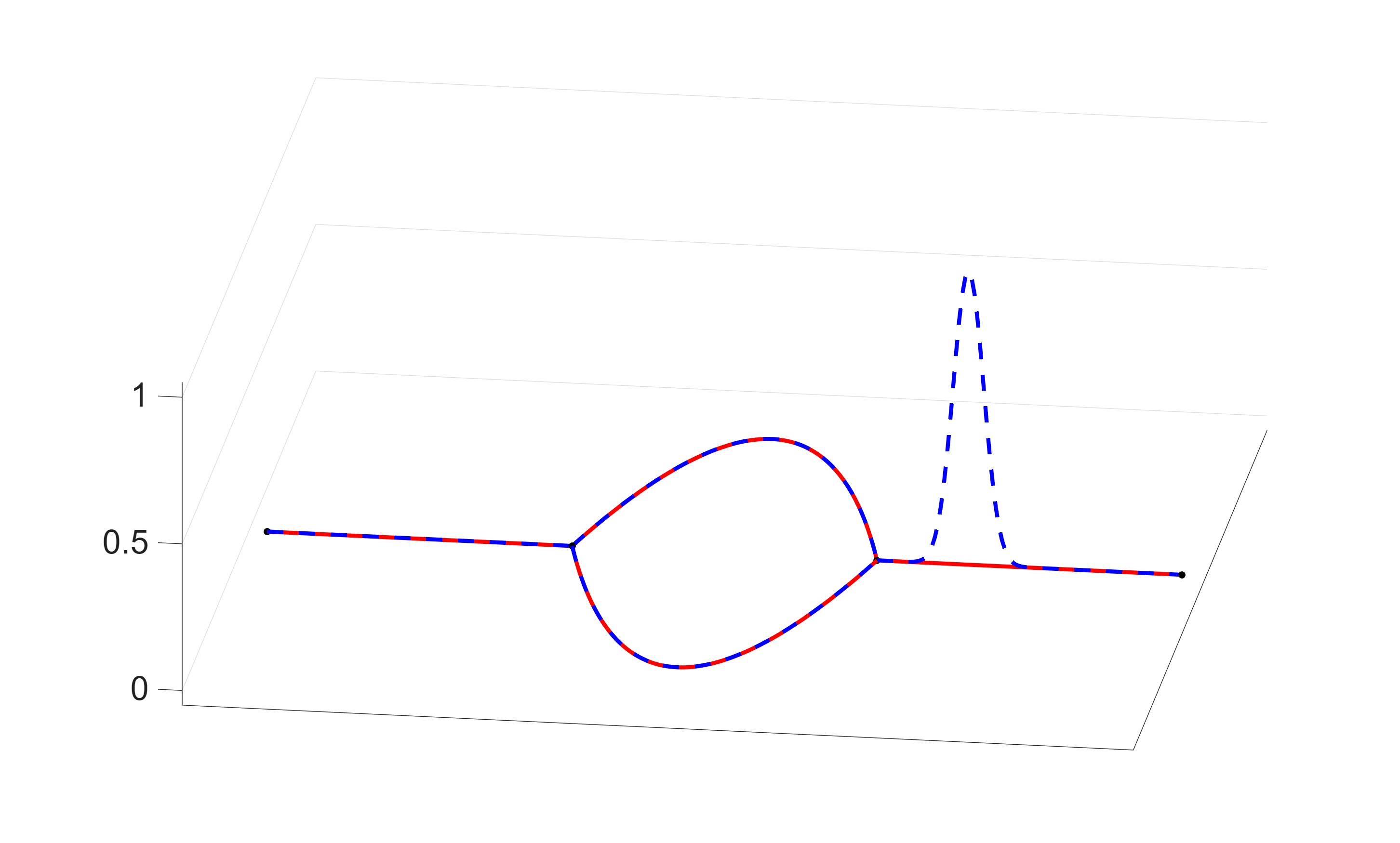}}
		\end{minipage}
		\hfill
		\begin{minipage}[t]{0.48\textwidth}
			\centering
			\textbf{Asymmetric island}\\[0.3em]
			
			\subfloat[$t=0$]{\label{IAS_t0}
				\includegraphics[height=1.7in,keepaspectratio]{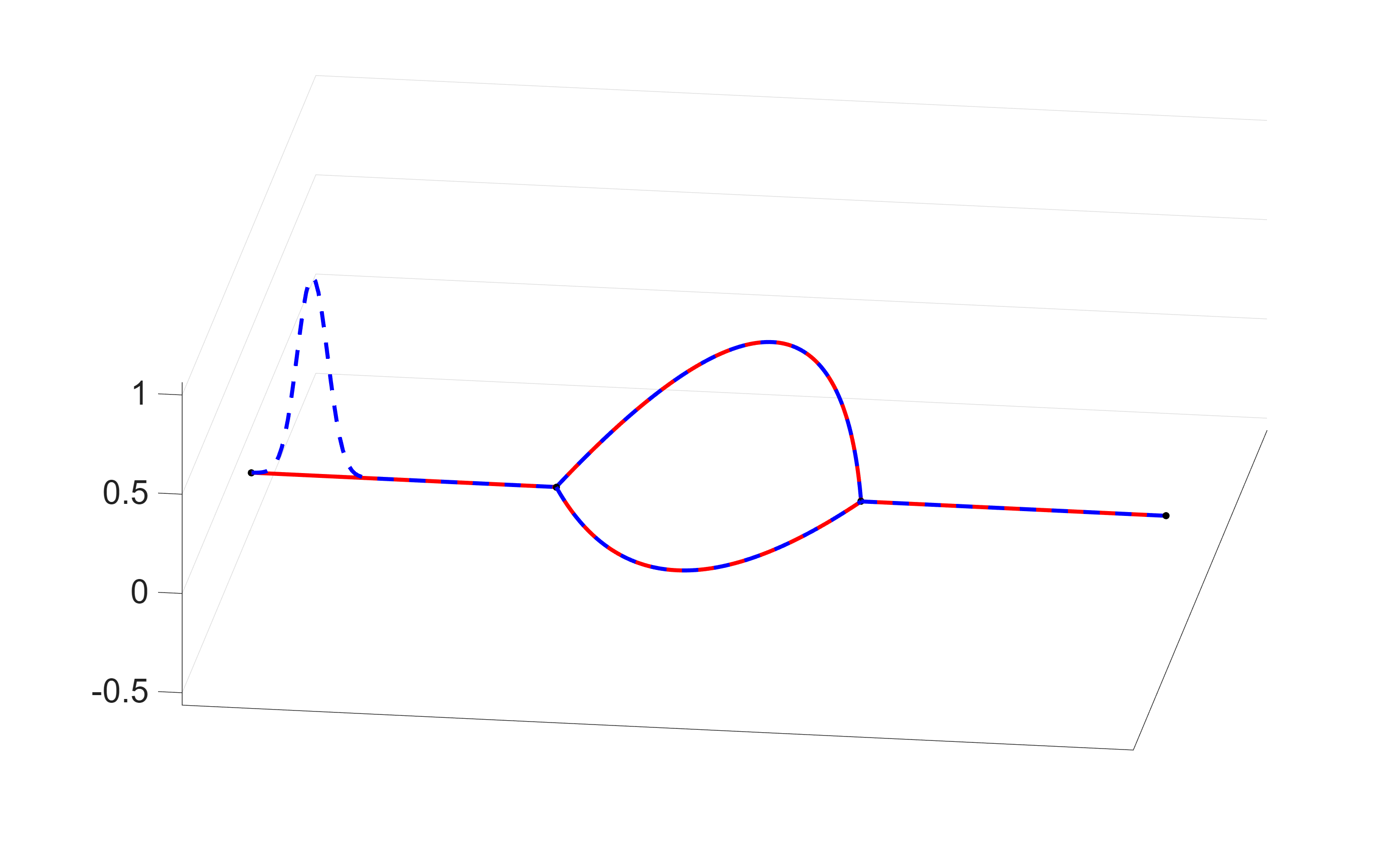}}\\
			
			\subfloat[$t=0.8$]{\label{IAS_t0p8}
				\includegraphics[height=1.7in,keepaspectratio]{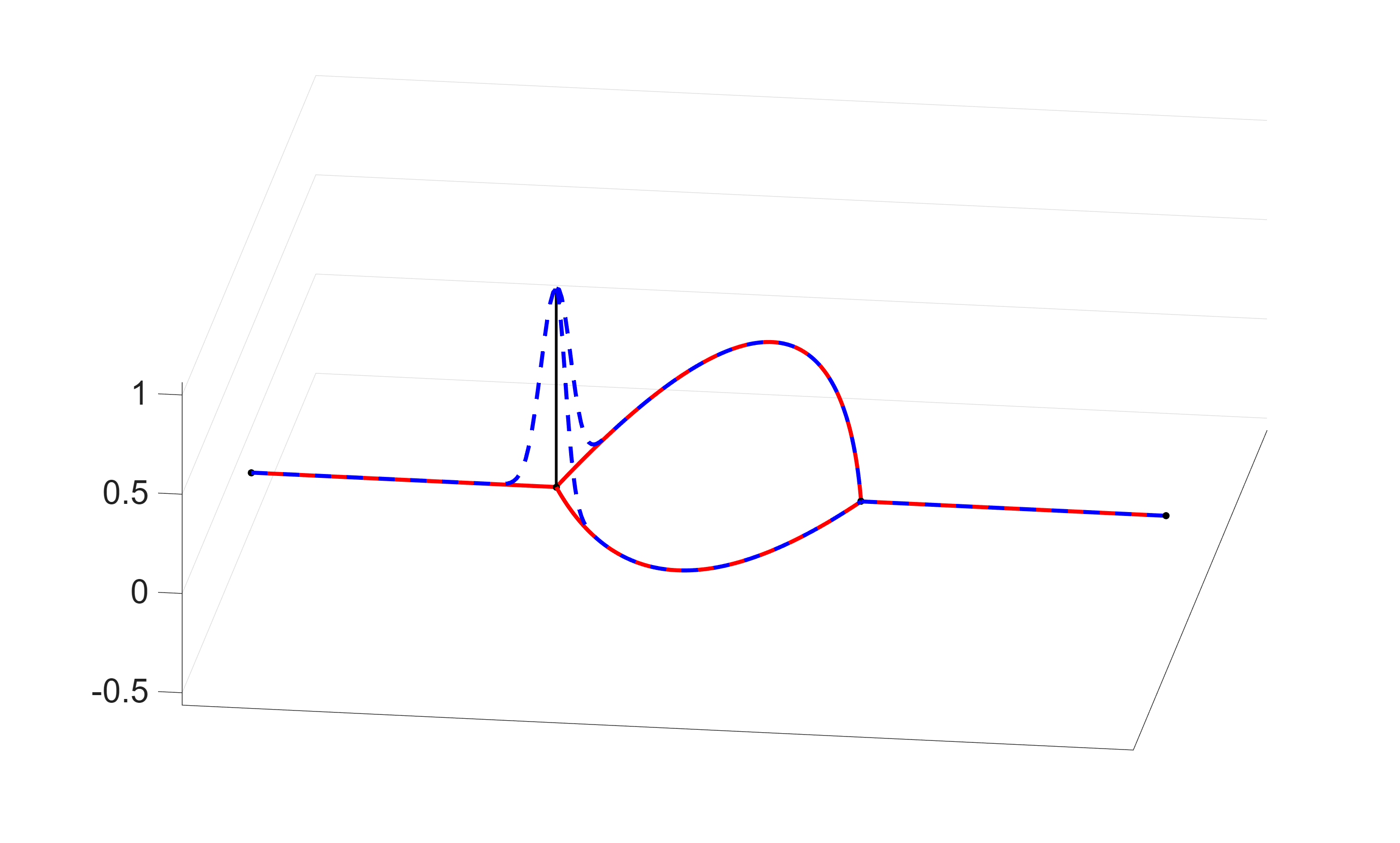}}\\
			
			\subfloat[$t=1.6$]{\label{IAS_t1p6}
				\includegraphics[height=1.7in,keepaspectratio]{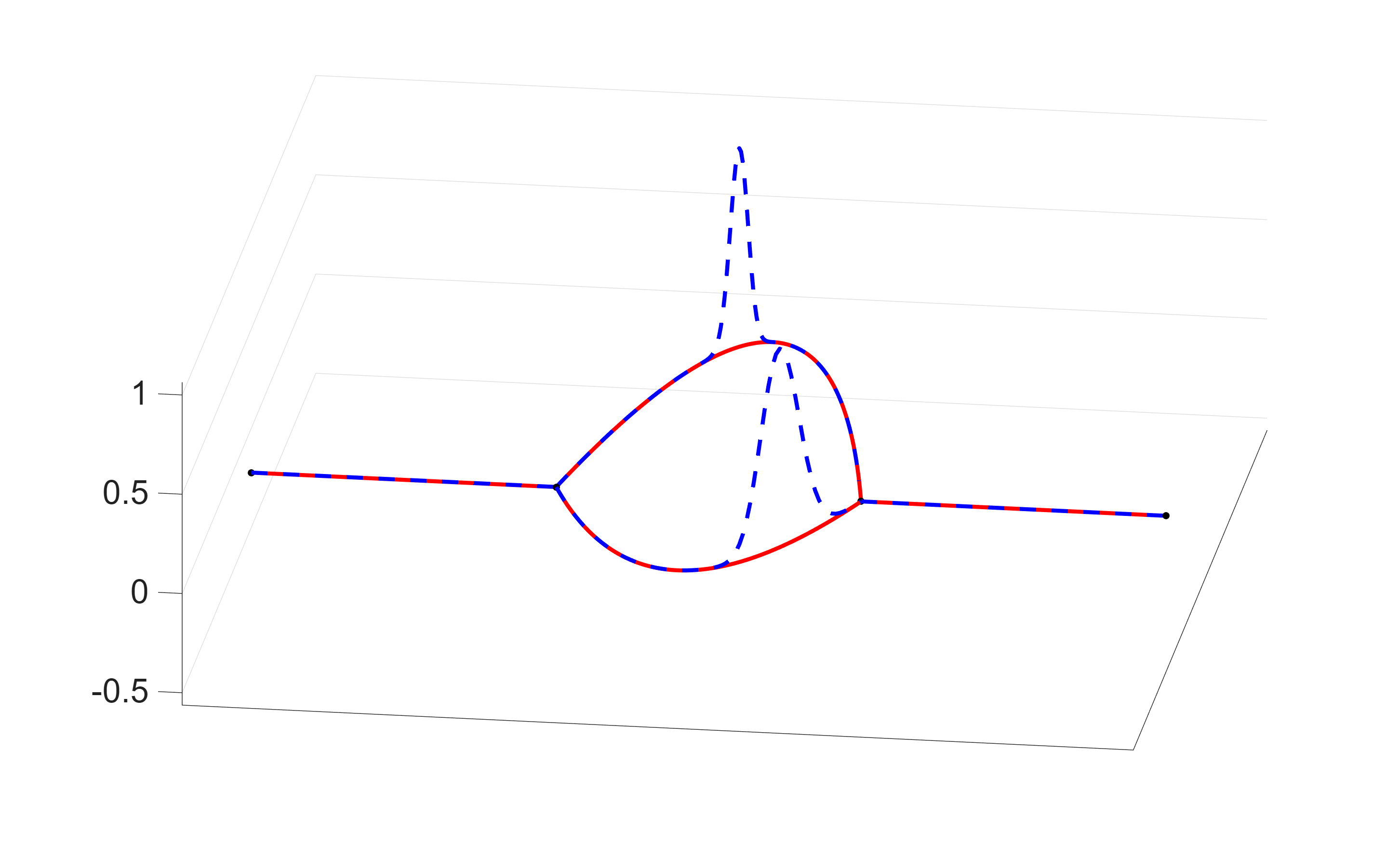}}\\
			
			\subfloat[$t=2.1$]{\label{IAS_t2p1}
				\includegraphics[height=1.7in,keepaspectratio]{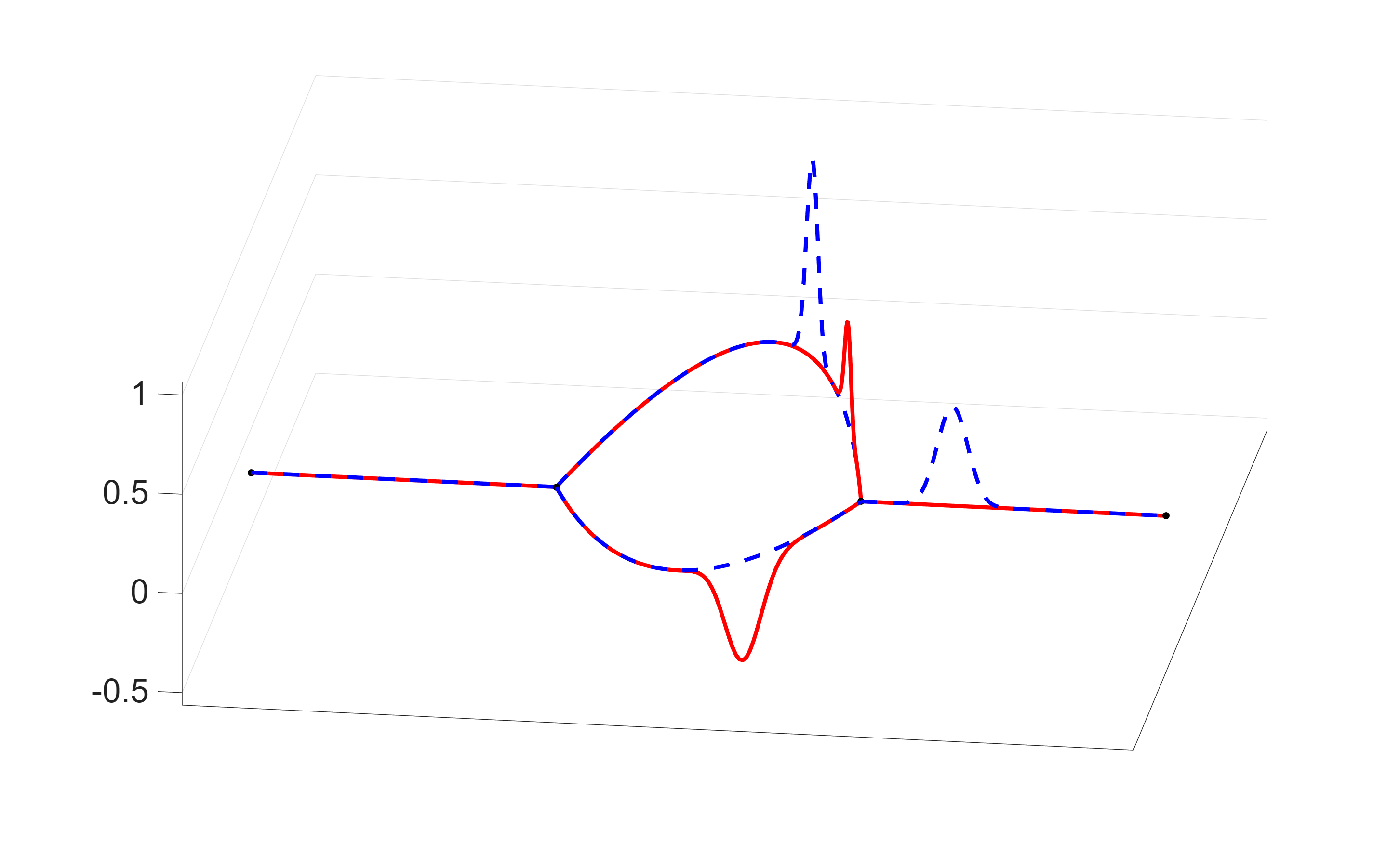}}
		\end{minipage}
		
		\caption{
			Numerical visualization of the two-path island mechanism.  \textbf{Left column}: symmetric island with equal arc lengths. The incoming Gaussian splits at the diverging junction into two identical downstream disturbances, which arrive synchronously at the balanced converging junction and transmit without reflection. \textbf{Right column}: asymmetric island with unequal arc lengths. The initial splitting remains reflectionless, but the longer top edge delays one partial wave; the two arrivals at the converging junction are therefore out of sync, generating upstream reflected waves. Thus local balance at both junctions is sufficient for transparency only when the partial waves arrive synchronously.
		}
		\label{fig:island_sims}
	\end{figure*}
	
	After propagating around the island the respective waves then reach $\nu_C$. Now with the possibility of two different disturbances along the top (T) and bottom (B) reaches of the island, we find
	\begin{equation}
		\label{IO_BCJ_CJ}
		\begin{bmatrix}
			v_i(t)\\
			v_j(t) \\
			z_k(t)\\
		\end{bmatrix}
		=
		\begin{bmatrix}
			-\frac{1}{2} &\frac{1}{2} & 1   \\
			\frac{1}{2}& -\frac{1}{2} & 1 \\
			\frac{1}{2} & \frac{1}{2} & 0  \\
		\end{bmatrix}\\ 
		\begin{bmatrix}
			a_T(t)\\
			a_B(t) \\
			0\\
		\end{bmatrix}
		= 
		\begin{bmatrix}
			\frac{1}{2}(a_B(t)-a_T(t))\\
			\frac{1}{2}(a_T(t)-a_B(t)) \\
			\frac{1}{2}(a_T(t)+a_B(t))\\
		\end{bmatrix}.
	\end{equation}
	This junction is balanced, but it is transparent only when the two incoming traces agree at the same time, \(a_T(t)=a_B(t)\) (see, in particular, \cref{IS_t1p8}).  If the arrivals are not synchronized, \(a_T(t)\neq a_B(t)\), the balanced vertex reflects the nonsynchronized component. In the asymmetric island, the shorter bottom path reaches \(\nu_C\) first, so \cref{IO_BCJ_CJ} predicts reflected components on both arcs; these are visible in \cref{fig:island_sims}. At each arrival, a depression wave is reflected upstream along the incoming reach (reach $j$, or the bottom shorter edge) while an elevation wave is reflected upstream along the quiescent incoming reach (reach $i$, or the upper longer edge). This is clearly observed in \cref{IAS_t2p1}, where at the respective time $a_T(t)=0$. 

    The two-path simulations in \cref{fig:island_sims} visualize the synchronization mechanism in \cref{prop:island_sync}: local balance at the two vertices removes reflection only when the two arc arrivals at \(\nu_C\) are synchronized.  

    \subsection{Degree-four star graph}
    We next simulate a degree-four balanced star graph with two source edges, as a direct check of the source-relative balance condition in \cref{def:balanced_vertex}. The source edges have width $B_1=1/3$ and $B_2=2/3$, while the initially quiescent edges have width $B_3=1/4$ and $B_4=3/4$, respectively. Thus $B_1+B_2=B_3+B_4=1$ so the vertex is balanced relative to the source set $\Sigma = \left\{e_1,e_2\right\}$. Both source edges are initialized with the same Gaussian pulse per \cref{IC}.

    For completeness, by \cref{prop:vertex_scattering}, the scattering matrix of a degree-four star graph, with the prescribed widths, is given by
    \[
    S
    =
    \begin{pmatrix}
    -\frac{2}{3} & \frac{2}{3} & \frac{1}{4} & \frac{3}{4} \\
    \frac{1}{3} & -\frac{1}{3} & \frac{1}{4} & \frac{3}{4} \\
    \frac{1}{3} & \frac{2}{3} & -\frac{3}{4} & \frac{3}{4} \\
    \frac{1}{3} & \frac{2}{3} & \frac{1}{4} & -\frac{1}{4}
    \end{pmatrix}.
    \]
    The source-source reflection block in this case is 
    \[
    S_{\Sigma,\Sigma}
    =
    \begin{pmatrix}
    -\frac{2}{3} & \frac{2}{3} \\
    \frac{1}{3} & -\frac{1}{3}
    \end{pmatrix},
    \]
    which clearly annihilates synchronized incoming data from the source edges. As a result, 
    \[
    S\left[\begin{array}{c}
    a\\
    a\\
    0\\
    0
    \end{array}\right]=\left[\begin{array}{c}
    0\\
    0\\
    a\\
    a
    \end{array}\right],
    \]
    that is, no reflection in the source edges. 
    
    As shown in \cref{fig:deg4}, the two incoming pulses interact with the junction without generating reflected waves on the source edges. The transmitted waves enter the two initially quiescent edges with the same amplitude as the incoming synchronized source mode, consistent with \cref{prop:balance_rank}.
    \begin{figure}
        \centering
        \subfloat[$t=0$]{\label{deg4_t0}
				\includegraphics[height=1.7in,keepaspectratio]{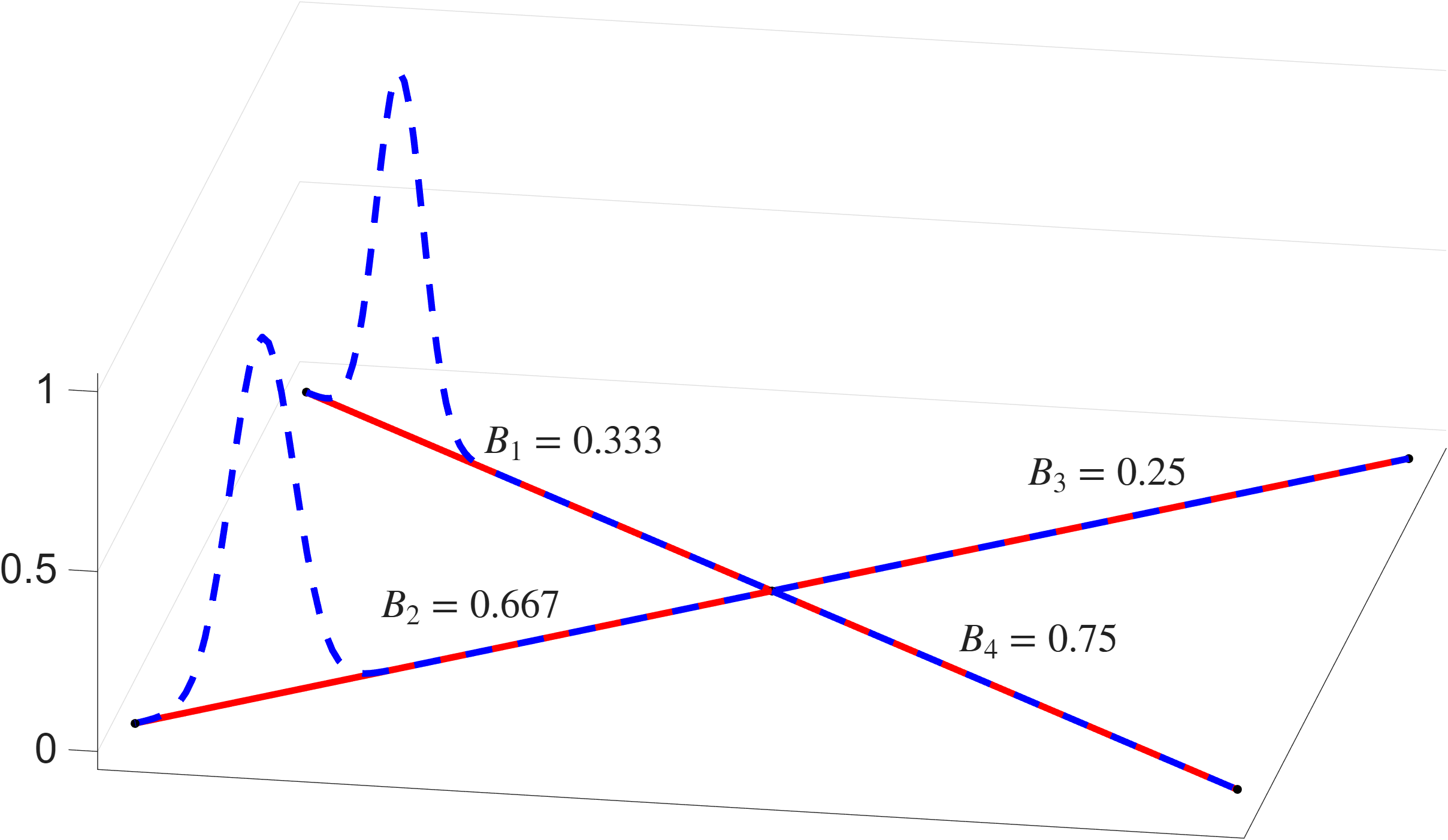}}\\			
        \subfloat[$t=0.8$]{\label{deg4_t0p8}
            \includegraphics[height=1.7in,keepaspectratio]{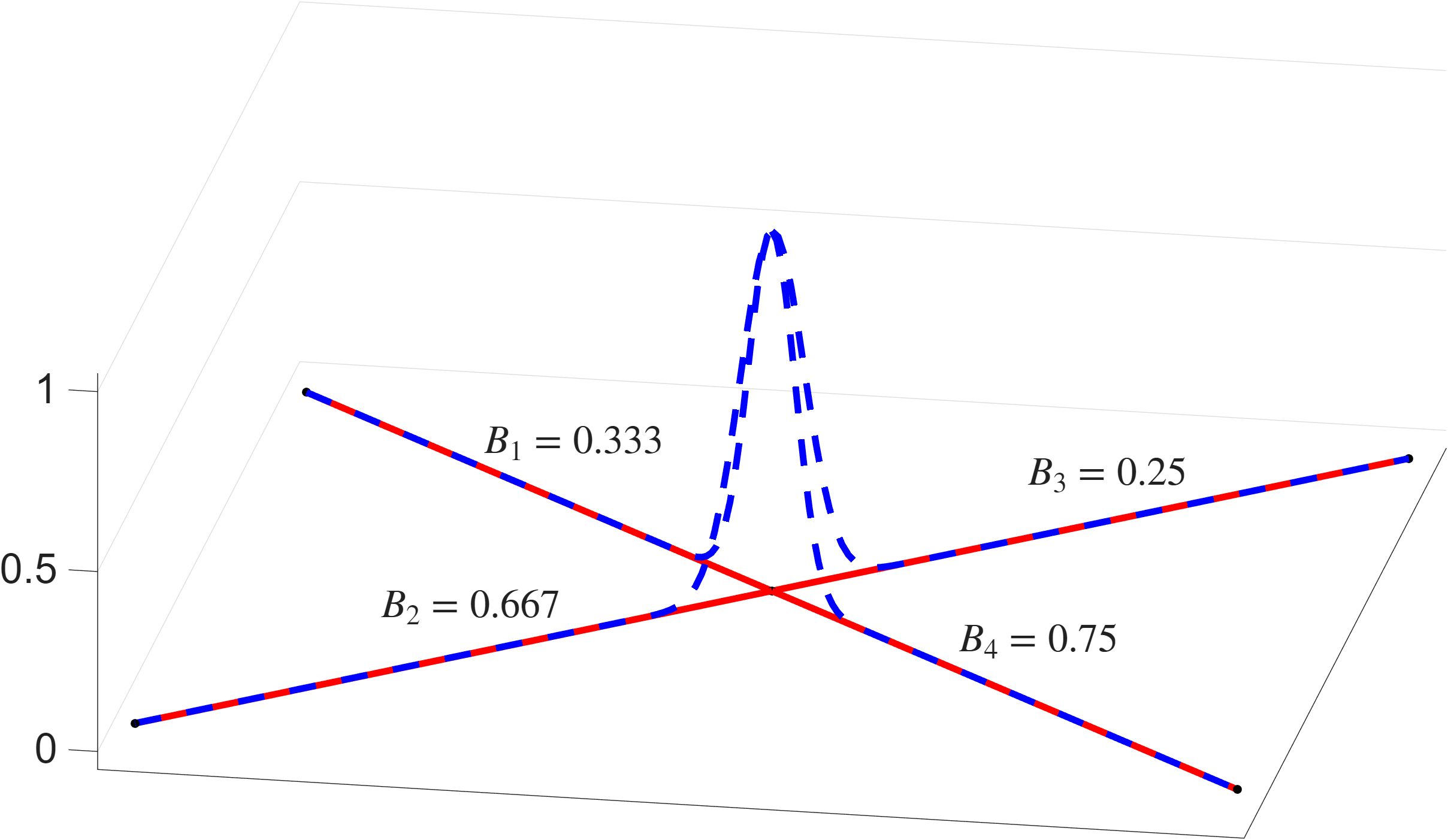}}\\        
        \subfloat[$t=1.6$]{\label{deg4_t1p8}
            \includegraphics[height=1.7in,keepaspectratio]{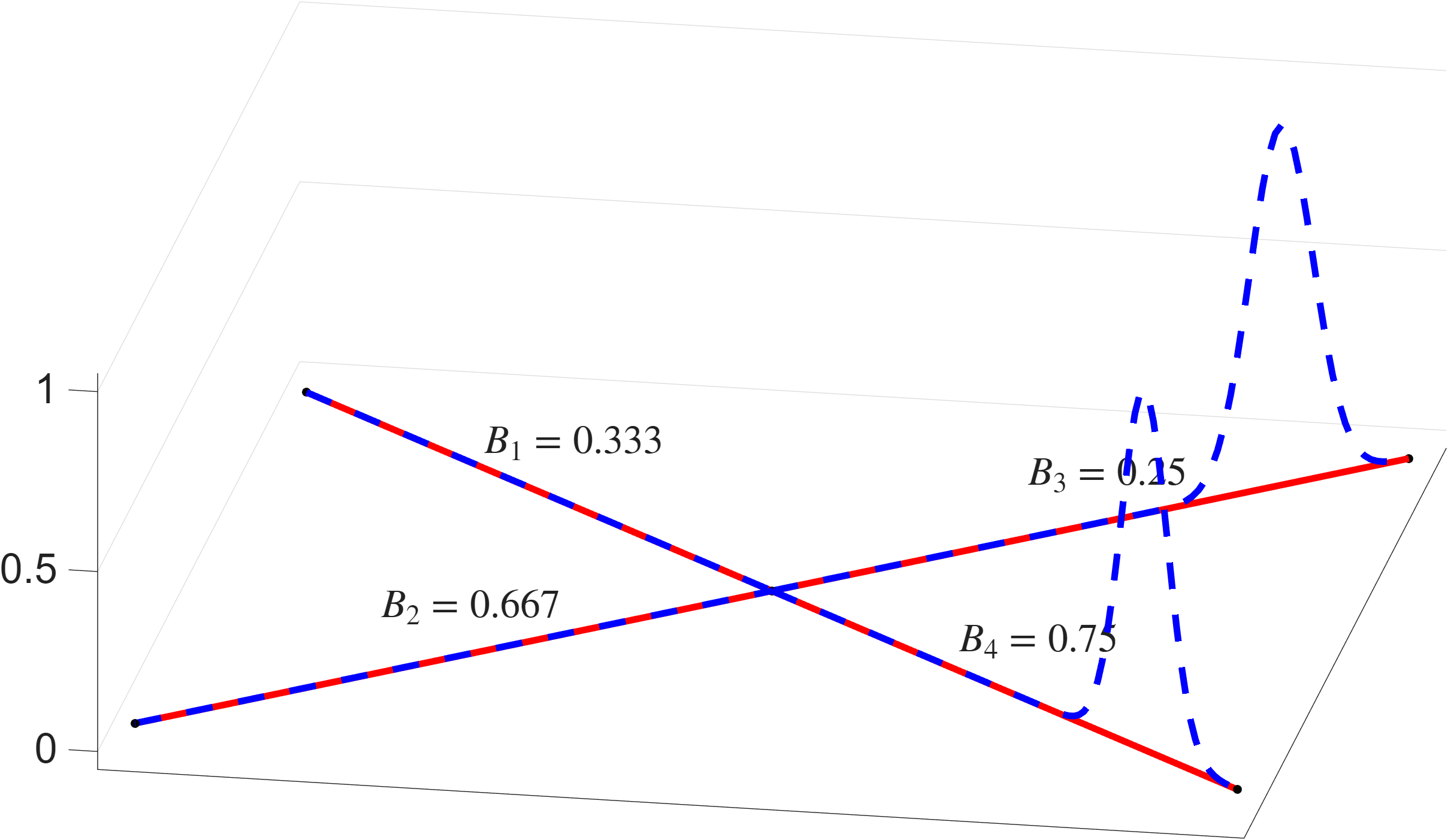}}
        \caption{Numerical visualization of a degree-four star graph with incoming waves on two source edges.  The source edges have widths \(B_1=1/3\) and \(B_2=2/3\), while the initially quiescent edges have widths \(B_3=1/4\) and \(B_4=3/4\).  Since \(B_1+B_2=B_3+B_4\), the vertex is balanced relative to the source set \(\Sigma=\{e_1,e_2\}\).  The snapshots at \(t=0\), \(t=0.8\), and \(t=1.6\) show that synchronized incoming pulses generate no reflection on the source edges and transmit into the complementary edges.}
        \label{fig:deg4}
    \end{figure}

    \subsection{Three-path first-generation reflection}

    For the minimal nontrivial multi-path case, take \(N=3\) for an island with equal internal widths and unequal path lengths.  The incident pulse is again the Gaussian in \cref{IC}; the reflected trace is measured on \(e_0\) at the upstream junction \(\nu_D\).  Since the Gaussian center reaches \(\nu_D\) at time \(t_D\), we plot the trace against the local time \(\tau=t-t_D\).  This aligns the numerical trace with the convention used in \cref{eq:beta0_N_time}, where \(a(t)\) denotes the incoming trace at \(\nu_D\), not the initial profile on \(e_0\).
    
    \begin{figure}[!t]
		\centering
		\includegraphics[width=0.48\textwidth,keepaspectratio]{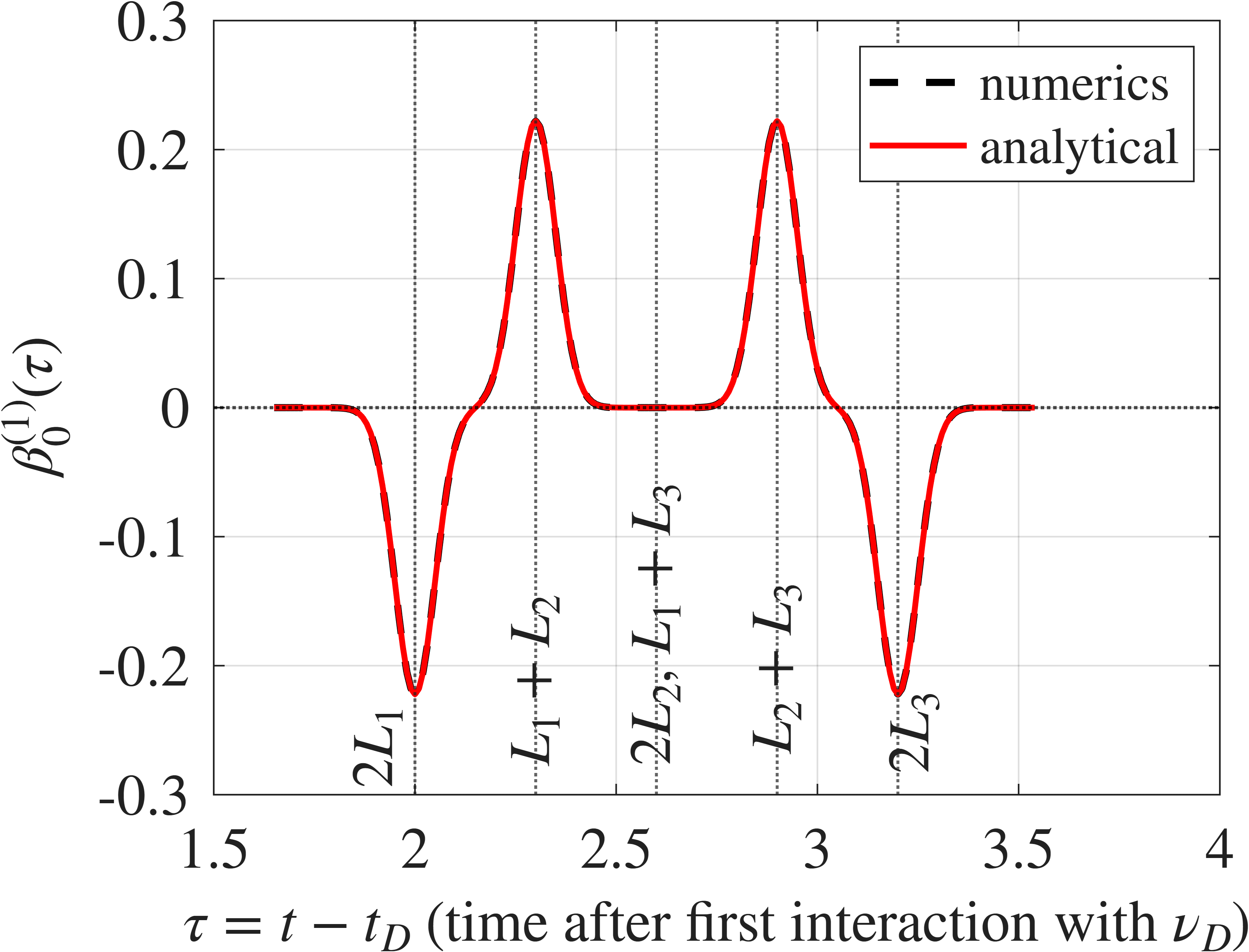}
		\caption{Time-domain verification of \cref{eq:beta0_N_time} for a balanced three-path island.  The reflected characteristic amplitude on \(e_0\) is measured at \(\nu_D\) and plotted against \(\tau=t-t_D\), where \(t_D\) is the arrival time of the incident Gaussian at \(\nu_D\).  The dashed curve is the numerical trace from the Lagrangian characteristic scheme, and the solid curve is the analytical first-generation prediction.  The labels \(\tau=2L_k\) mark same-path return terms, while
        \(\tau=L_k+L_j\) mark mixed-path return terms in \cref{eq:beta0_N_time}.    
		}
		\label{fig:N3_time_domain}
	\end{figure}

    \cref{fig:N3_time_domain} verifies the delayed-sum structure in \cref{eq:beta0_N_time}.  The negative same-path terms \(-a(\tau-2L_k)\) and the positive mixed-path terms \(2a(\tau-L_k-L_j)\) appear at the labeled delay times and combine to produce the observed alternating signs. The agreement between the numerical trace and the analytical curve shows that the formula accounts for all first-generation returns generated by the first interaction at \(\nu_C\), including both same-path and cross-path scattering.
    
	\subsection{Frequency-domain cancellation}\label[appendix]{app:frequency_cancellation}
    
	The frequency-domain form in \cref{eq:beta0_N_freq} separates the incoming signal from the geometry: \(\widehat{\beta}^{(1)}_0(\omega) = \widehat a(\omega)\,\mathcal H(\omega)\), so \(|\mathcal H(\omega)|\) is the first-generation reflection multiplier of the island geometry alone. In \cref{fig:N3_reflection_factor}, we show this factor for three equal-width three-path islands. The equal-length case \(L=\{1,1,1\}\) gives \(\mathcal H \equiv 0\), as predicted by part 1 of \cref{cor:codim_lengths}. For \(L=\{1,1.2,1.4\}\) (\blue{blue} curve in \cref{fig:N3_reflection_factor}), the length differences are integer multiples of $0.2$, so part 2 of \cref{cor:codim_lengths} predicts zeros at \(\omega_m = \frac{2\pi m}{0.2} = 10\pi m\) -- note that $|\mathcal{H}(\omega)|$ is smooth at these roots (because these roots are in fact double zeros due to the sine squared in \cref{eq:sine_squared}). Furthermore, this geometry yields two more roots arising from the uniform length difference $\Delta L=0.2$ and the equal width $\pi_k=1/3$. Indeed, define $q=e^{-i\omega\Delta L}$, we write out $P(\omega)$ for $N=3$
    \[
    \begin{aligned}
    P(\omega)
    &=
    \frac{1}{3}
    \left(
    e^{-iL_1 \omega} + e^{-iL_2 \omega} + e^{-iL_3 \omega}
    \right)\\
    &=
    \frac{e^{-iL_1 \omega}}{3}
    \left(
    1 + e^{-i \Delta L\omega} + e^{-i 2\Delta L\omega}
    \right)\\
    &= \frac{e^{-i\omega}}{3}
    \left( 1 + q + q^2 \right),
    \end{aligned}
    \]
    which vanishes if $q$ is the cube root of unity, i.e. when $\omega_m = \frac{2\pi m}{3\Delta L}$. Similarly,
    \[
    \begin{aligned}
    P(2\omega_m)
    &=
    \frac{e^{-2i\omega_m}}{3}
    \left( 1 + e^{-2i\omega_m\Delta L} + e^{-4i\omega_m\Delta L} \right)\\
    &=
    \frac{e^{-2i\omega_m}}{3} \left( 1+q^2+q^4 \right).
    \end{aligned}
    \]
    Since \(q^4=q\), we also have
    \[
        1+q^2+q^4=1+q^2+q=0,
    \]
    and therefore $P(2\omega_m)=0$. Thus $\mathcal H(\omega_m) = P(\omega_m)^2-P(2\omega_m) = 0$. Note in \cref{fig:N3_reflection_factor} that $|\mathcal{H}(\omega)|$ is nonsmooth at the zeros arising from the cube root of unity (nonsmooth because these zeros are simple, and the modulus produces a corner in $|\mathcal{H}|$).

    However, for the unequal length \(L=\{1,1.2,1.5\}\) where we no longer have a uniform length difference, the roots of unity do not appear. The only roots arise from part 2 of \cref{cor:codim_lengths}. In this case, the length differences are integer multiples of $d=0.1$; indeed, observing in \cref{fig:N3_reflection_factor}, the first nonzero root of $|\mathcal{H}(\omega)|$ is at $\omega_0 = \frac{2\pi}{0.1} = 20\pi$ (incidentally coinciding with a root for the uniform length case). 
    
    Beyond $N=3$, $\mathcal H\left(\omega\right)$ can vanish at frequencies not captured by part 2 of \cref{cor:codim_lengths}, i.e. for length tuples that share no common divisor. Consider $N=4$ with equal width $\pi_k = 1/4$. Then, following the definition of $\mathcal H\left(\omega\right)$ we have
    \[
    \mathcal{H}(\omega) = P(\omega)^2 - P(2\omega)= \left(\frac{1}{4}\sum_{k=1}^{4} \zeta_k\right)^2 - \frac{1}{4} \sum_{k=1}^{4} \zeta_k^2 
    \]
    where $\zeta_k = e^{-i\omega L_k}$. We choose $\{\zeta_k\}_{k=1}^{4} = e^{i\theta_0}\{1,e^{i\theta},1,e^{-i\theta}\}$ for $\theta$ and $\theta_0$ to be determined. Substituting these complex numbers back into the sum and simplifying using the double angle formula for cosine, we have
    \[
    \mathcal{H}(\omega) = \frac{e^{2i\theta_0}}{4}\left(1-\cos\theta\right)\left(1+3\cos\theta\right).
    \]
    It is clear that $\theta^* = 0$ corresponds to the equal length case; meanwhile, $\theta^* = \arccos\left(-\frac{1}{3}\right)$ provides a family of lengths. One can choose $\omega = 2\pi$, then 
    \[
        \left\{ L_{1},L_{2},L_{3},L_{4}\right\} =\left\{ 1,2-\frac{\theta^{*}}{2\pi},2,1+\frac{\theta^{*}}{2\pi}\right\} 
    \]
    is one set of admissible, irrational lengths (among many equivalent ones) and thus incommensurable in the sense of \cref{cor:codim_lengths} part 2, and ultimately kills $\mathcal{H}$. A complete classification of the cancellation variety at $N\ge 4$ is left to future work.

    \begin{figure}[!t]
		\centering
		\includegraphics[width=0.48\textwidth,keepaspectratio]{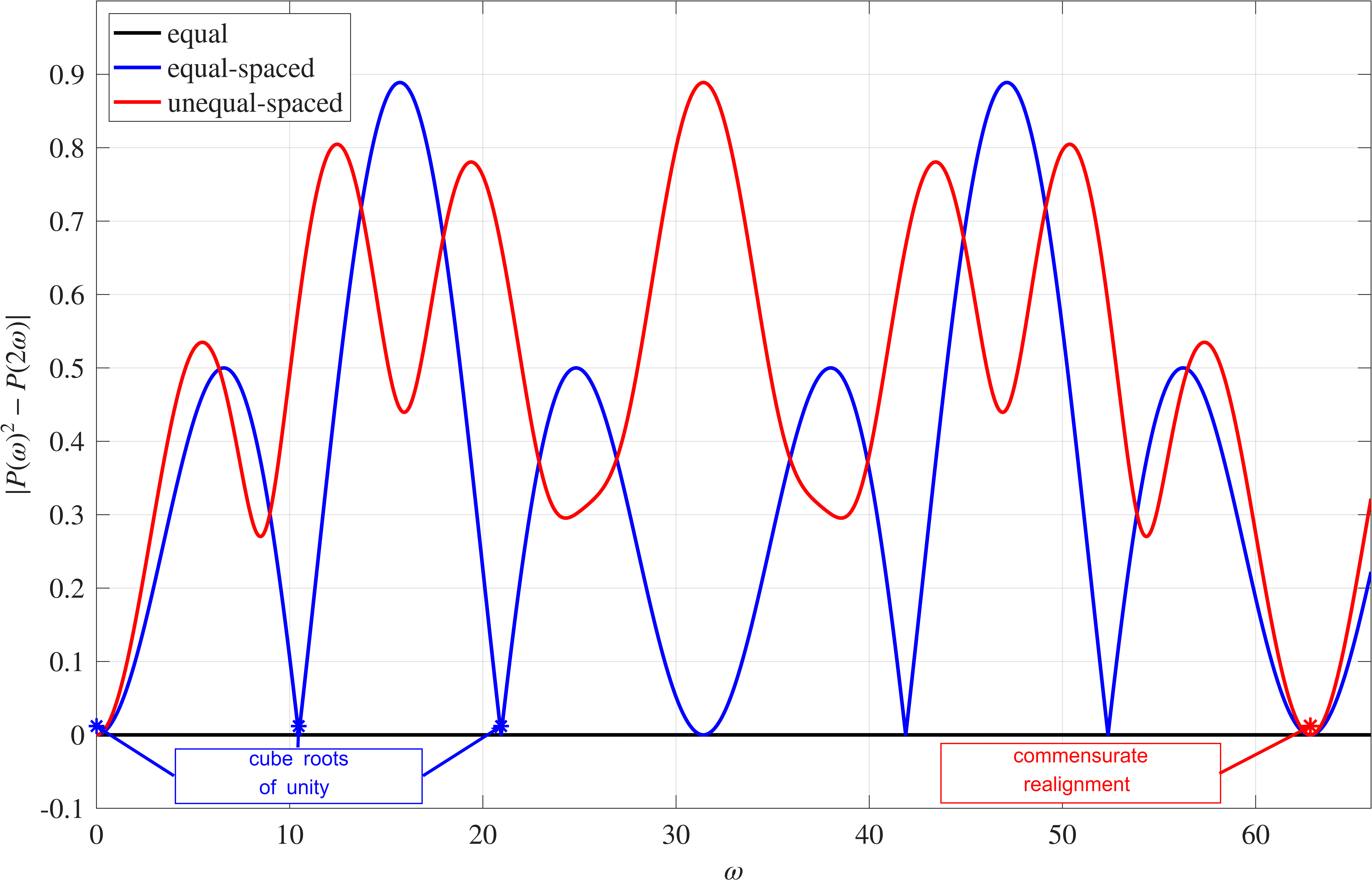}
        \caption{Reflection factor \(\left|P(\omega)^2-P(2\omega)\right|\) for a balanced three-path island at uniform \(\pi_k=1/3\), comparing three length configurations.  Equal lengths \(L=\{1,1,1\}\) (black): \(\mathcal H\equiv 0\), in agreement with part 1 of \cref{cor:codim_lengths}.  Equal-spaced \(L=\{1.0,1.2,1.4\}\) (\blue{blue}): all pairwise differences share \(d=0.2\), so part 2 of \cref{cor:codim_lengths} gives zeros at \(\omega_m = 10\pi m\); the blue curve additionally vanishes at \(\omega=(10\pi/3)m\), where the three complex exponentials \(e^{-i\omega L_k}\) align as cube roots of unity --- a $\pi$-dependent cancellation requiring the uniform \(\pi_k=1/3\) used here.  Unequal-spaced \(L=\{1.0,1.2,1.5\}\) (\textcolor{red}{red}): pairwise differences share \(d=0.1\), so part 2 of \cref{cor:codim_lengths} gives zeros at \(\omega_m = 20\pi m\), the first appearing near \(\omega\approx 62.8\); no cube-roots-of-unity zeros occur as the lengths are not equally spaced.  Both unequal-length curves display the parabolic near-zero growth \(\mathcal H(\omega)=\omega^2\operatorname{Var}_\pi(X)+O(\omega^3)\) predicted by the low-frequency expansion per \cref{eq:variance}.}
		\label{fig:N3_reflection_factor}
	\end{figure}

    \begin{figure}[!t]
		\centering
		\begin{tikzpicture}[
			scale=0.8,
			junction/.style={circle, draw=black, fill=gray!40, thick, inner sep=2.5pt},
			edge/.style={line width=2.2pt, red!80!black},
			flow/.style={-{Latex[length=4mm,width=2.5mm]}, line width=1.5pt, red!80!black},
			lab/.style={font=\small}
			]
			
			\node[junction, label=below left:{$\nu_D$}] (D) at (0,0) {};
			\node[junction, label=below right:{$\nu_C$}] (C) at (6,0) {};
			
			\draw[edge] (-2,0) -- (D);
			\draw[edge] (C) -- (8,0);
			
			\draw[edge] (D) .. controls (1.6,-0.8) and (4.4,-0.8) .. (C);
			\draw[edge] (D) .. controls (1.1,1.15) and (4.9,1.15) .. (C);
			\draw[edge] (D) .. controls (0.75,2.65) and (5.25,2.65) .. (C);
			
			\draw[flow] (-2,0.45) -- (-0.25,0.45);
			
			\node[lab, above] at (-1,0) {$e_0$};
			\node[lab, above] at (7,0) {$e_\infty$};
			
			\node[lab, below] at (3,-0.65) {$e_1$};
			\node[lab, below] at (3,-0.9) {$L_1=1$};
			
			\node[lab, above] at (3,0.85) {$e_2$};
			\node[lab, below] at (3,0.85) {$L_2=1.2$};
			
			\node[lab, above] at (3,2.45) {$e_3$};
			\node[lab, above] at (3,2.05) {$L_3=1.4$};
			
		\end{tikzpicture}	
		\caption{Three-path island used for the frequency-selective numerical test.  The
			internal arc lengths are \(L_1=1\), \(L_2=1.2\), and \(L_3=1.4\), with equal
			internal widths \(B_1=B_2=B_3\) and boundary widths $B_0 = B_\infty = B_1+B_2+B_3$. Both junctions are then balanced relative to their respective source sets: $\nu_D$ relative to $\Sigma_D = {e_0}$, and $\nu_C$ relative to $\Sigma_C = {e_1,e_2,e_3}$. The incoming disturbance from $e_0$ splits at $\nu_D$, recombines at $\nu_C$, and may generate an upstream reflection signal on $e_0$.}
		\label{fig:3-path-island}
	\end{figure}

    \begin{figure*}[!t]
		\centering
		\includegraphics[width=0.8\textwidth,keepaspectratio]{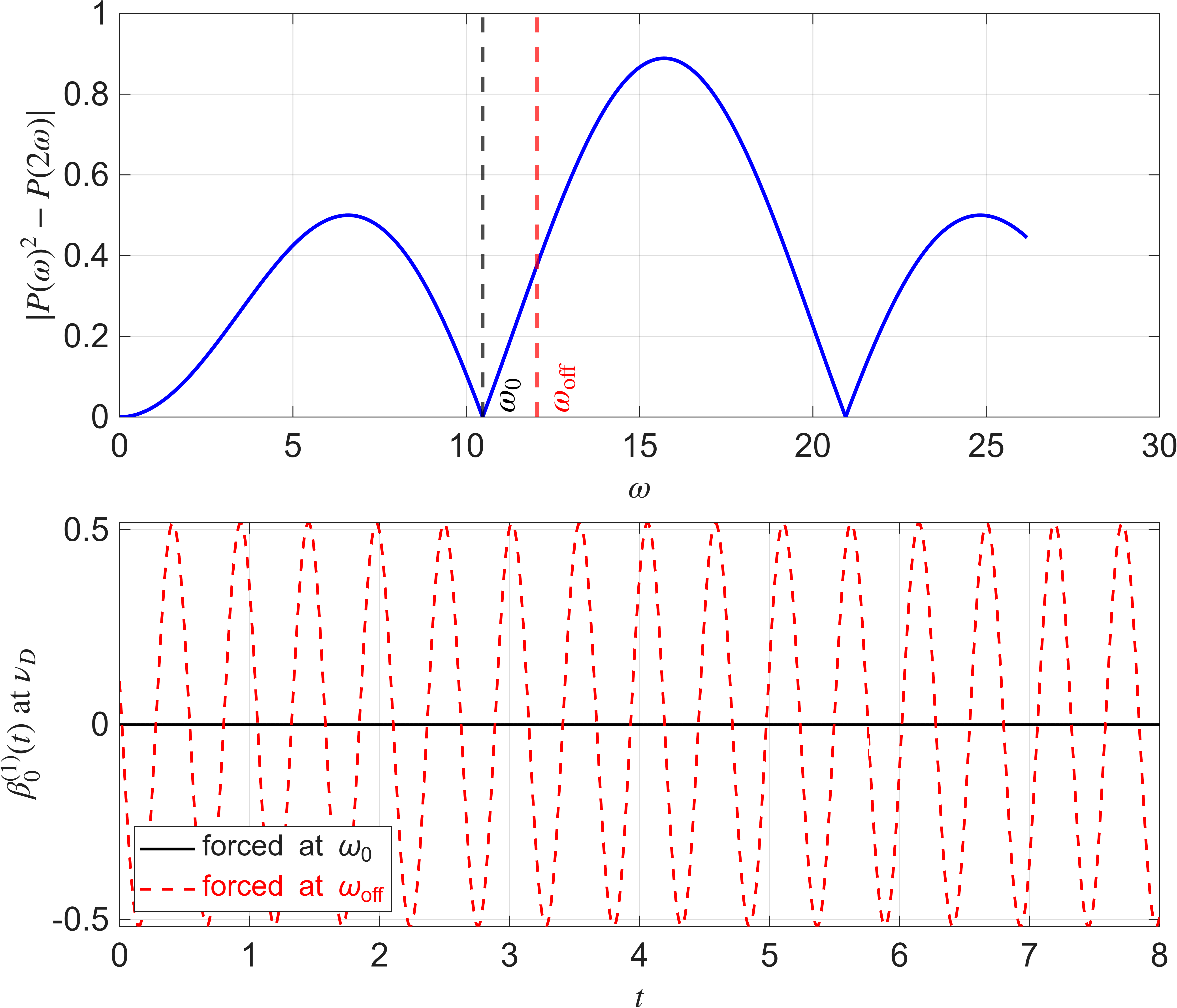}
		\caption{Frequency-selective cancellation for a balanced three-path island. \textbf{Top}: the first-generation reflection factor \(\left|P(\omega)^2-P(2\omega)\right|\). Here \(B_1=B_2=B_3\) and \(L=(1,1.2,1.4)\).  The selected frequency \(\omega_0\) is chosen so that \(\exp(-i\omega_0 L_k)\) form the cube roots of unity up to a common phase, giving \(P(\omega_0)^2-P(2\omega_0)=0\).  The off-null frequency \(\omega_{\rm off}\) is shown for comparison. \textbf{Bottom}: reflected boundary traces $\beta_0^{(1)}(t)$  measured at \(\nu_D\) under purely harmonic boundary forcing on \(e_0\), with every edge pre-initialized at \(t=0\) to the exact steady-state harmonic profile. Forcing at \(\omega_0\) drives \(\beta_0^{(1)}(t)\) to numerical zero for all \(t\), confirming \(\mathcal{H}(\omega_0)=0\); forcing at \(\omega_{\rm off}\) produces a persistent oscillatory reflection with amplitude consistent with the analytical \(|\mathcal{H}(\omega_{\rm off})|\).
        }
		\label{fig:N3_selective_frequency}
	\end{figure*}

    We now run the time-domain test for the equal-spaced configuration (in {\color{blue}blue} in \cref{fig:N3_reflection_factor}). This geometry is shown in \cref{fig:3-path-island}; it is not broadband transparent, but its lengths are chosen so that the first-generation reflection multiplier vanishes at one prescribed frequency. The internal widths are equal and
    \[
        L_1=1,\quad L_2=1+\Delta L,\quad L_3=1+2\Delta L,
        \qquad \Delta L=0.2.
    \]
    Choosing $\omega_0=\frac{2\pi}{3\Delta L}$ as the matching frequency makes the phase factors \(e^{-i\omega_0L_k}\) the cube roots of unity up to a common phase. Consequently, $P(\omega_0)=P(2\omega_0)=0$, and therefore \(\mathcal{H}(\omega_0)=P(\omega_0)^2-P(2\omega_0)=0\).  This pointwise reflection cancellation is frequency-selective.

    For the time-domain test, we prescribe the incoming characteristic data at the upstream boundary of \(e_0\) as a pure harmonic,
    \[
        z_{e_0}(0,t)=\sin(-\omega t),\qquad t\ge 0.
    \]
    At \(t=0\) we initialize every edge with the exact harmonic steady-state solution at frequency \(\omega\). On the internal edges we have both characteristic components \(z\) and \(v\) activated.  
    
    We run this test with two forcing frequencies upstream: the selected matching/reflectionless frequency \(\omega=\omega_0\), where \(\mathcal{H}(\omega_0)=P(\omega_0)^2-P(2\omega_0)=0\), and a nearby (perturbed) frequency \(\omega_{\rm off}=\omega_0+\Delta \omega\), where \(\mathcal{H}(\omega_{\rm off})\neq 0\). The lower panel of \cref{fig:N3_selective_frequency} shows the resulting reflected boundary traces \(\beta_0(t)\) at \(\nu_D\) over the full time window.  Incoming harmonics at \(\omega_0\) (black) holds \(\beta_0^{(1)}(t)= 0\)  for all \(t\), confirming the analytical cancellation \(\mathcal{H}(\omega_0)=0\) directly in the time domain.  
    An upstream incoming harmonics at \(\omega_{\rm off}\) (red) produces a clean steady sinusoid reflection trace throughout, with amplitude in line with the analytical \(|\mathcal{H}(\omega_{\rm off})|\). 

	
	Together, \cref{fig:N3_time_domain,fig:N3_reflection_factor,fig:N3_selective_frequency} numerically illustrate the two forms of \cref{prop:N_path_reflection}.  The time-domain comparison checks the delayed-sum formula in \cref{eq:beta0_N_time}. The reflection-factor plot checks the broadband cancellation criterion, while the boundary-forced test illustrates the frequency-selective cancellation mechanism described in \cref{cor:codim_lengths}.

    \newpage 
    
	\bibliographystyle{plainnat}
	\bibliography{references}
	
\end{document}